\def\dOi{13(3:24)2017}
\newcommand\barbelow[1]{\stackunder[1.2pt]{$#1$}{\rule{1.8ex}{.095ex}}}
\newcommand{\boweq}{\mathbin{\raisebox{0.1ex}{\barbelow{\bowtie}}}}
\newcommand{\bow}{\mathbin{\raisebox{0.1ex}{$\bowtie$}}}
\newcommand{\Z}{\mathbb{Z}}
\newcommand{\N}{\mathbb{N}}
\newcommand\post{\ensuremath{\textsf{Post}}}
\newcommand\posts{\ensuremath{\textsf{Post}^*}}
\newcommand\pre{\ensuremath{\textsf{Pre}}}
\newcommand\pres{\ensuremath{\textsf{Pre}^*}}
\renewcommand{\S}{\mathcal{S}}
\newcommand{\srun}{\xrightarrow{}}
\newcommand{\srunk}[1]{\xrightarrow{}\mathrel{\vphantom{\to}^{\!#1}}}
\newcommand\spost{\ensuremath{\post}}
\newcommand\sposts{\ensuremath{\post^*}}
\newcommand\spre{\ensuremath{\pre}}
\newcommand\spres{\ensuremath{\pre^*}}
\DeclareMathOperator{\upc}{\uparrow}
\DeclareMathOperator{\downc}{\downarrow}
\DeclareMathOperator{\upclex}{\uparrow_{\textsf{lex}}}
\DeclareMathOperator{\downclex}{\downarrow_{\textsf{lex}}}
\newcommand\ideals{\ensuremath{\textsf{Ideals}}}
\newcommand\idealslex{\ensuremath{\textsf{Ideals}_\textsf{lex}}}
\newcommand\idealdecomp{\ensuremath{\textsf{IdealDecomp}}}
\newcommand\leqlex{\ensuremath{\leq_\textsf{lex}}}
\newcommand\llex{\ensuremath{<_\textsf{lex}}}
\newcommand\glex{\ensuremath{>_\textsf{lex}}}
\newcommand\geqlex{\ensuremath{\geq_\textsf{lex}}}
\newcommand\preceqlex{\ensuremath{\preceq_\textsf{lex}}}
\newcommand{\C}{\ensuremath{\mathcal{C}}}
\newcommand{\problemx}[3]{
\par\noindent\underline{\sc#1}\par\nobreak\vskip.2\baselineskip
\begingroup\clubpenalty10000\widowpenalty10000
\setbox0\hbox{\bf INPUT:\ }\setbox1\hbox{\bf QUESTION:\ }
\dimen0=\wd0\ifnum\wd1>\dimen0\dimen0=\wd1\fi
\vskip-\parskip\noindent
\hbox to\dimen0{\box0\hfil}\hangindent\dimen0\hangafter1\ignorespaces#2\par
\vskip-\parskip\noindent
\hbox to\dimen0{\box1\hfil}\hangindent\dimen0\hangafter1\ignorespaces#3\par
\endgroup}
\newcommand{\defeq}{\stackrel{\scriptscriptstyle \text{def}}{=}}
\newcommand{\defiff}{\stackrel{\text{def}}{\iff}}
\newcommand{\tmi}{\textsf{Turing}_i}
\renewcommand{\vec}[1]{\bm{#1}}
\newcommand{\V}{\mathcal{V}}
\newcommand{\AT}{\textsf{AT}}
\begin{document}
\title{Well Behaved Transition Systems}

\author[M.~Blondin]{Michael Blondin\rsuper a}
\address{{\lsuper a}Université de Montréal, CNRS \& ENS Cachan – Université Paris-Saclay}
\email{blondimi@iro.umontreal.ca}  
\thanks{{\lsuper a}Supported by the Fonds de recherche du Québec – Nature et
  technologies (FRQNT), and the French Centre national de la recherche
  scientifique (CNRS)}

\author[A.~Finkel]{Alain Finkel\rsuper b}
\address{{\lsuper b}CNRS \& ENS Cachan – Université Paris-Saclay}
\email{finkel@lsv.ens-cachan.fr}  

\author[P.~McKenzie]{Pierre McKenzie\rsuper c}
\address{{\lsuper c}Université de Montréal}
\email{mckenzie@iro.umontreal.ca}  
\thanks{{\lsuper c}Supported by the ``Chaire Digiteo, ENS Cachan - {\'E}cole
  Polytechnique (France)'', and the Natural Sciences and Engineering
  Research Council of Canada.}

\keywords{WSTS, coverability, decidability, well-quasi-ordering, antichain.}
\subjclass{F.1.1. Models of Computation, F.3.1 Specifying and Verifying and Reasoning about Programs.}

\begin{abstract}
  The well-quasi-ordering (i.e., a well-founded quasi-ordering such
  that all antichains are finite) that defines well-structured
  transition systems (WSTS) is shown not to be the weakest hypothesis
  that implies decidability of the coverability problem. We show
  coverability decidable for monotone transition systems that only
  require the absence of infinite antichains and call \emph{well
    behaved transitions systems} (WBTS) the new strict superclass of
  the class of WSTS that arises.  By contrast, we confirm that
  boundedness and termination are undecidable for WBTS under the usual
  hypotheses, and show that stronger monotonicity conditions can
  enforce decidability. Proofs are similar or even identical to
  existing proofs but the surprising message is that a hypothesis
  implicitely assumed minimal for twenty years in the theory of WSTS
  can meaningfully be relaxed, allowing more orderings to be handled
  in an abstract way.
\end{abstract}

\maketitle

\section{Introduction}\label{sec:intro}

The concept of a \emph{well-structured transition system} (WSTS) arose
thirty years ago, in 1987 precisely~\cite{F-icalp87,F90}, where such
systems were initially called \emph{structured transition systems} and
shown to have decidable termination and boundedness problems.  WSTS
were developed for the purpose of capturing properties common to a
wide range of formal models used in model-checking, system
verification and concurrent programming.  The coverability for such
systems was shown decidable in
1996~\cite{AbdullaCJT96,DBLP:journals/iandc/AbdullaCJT00}, thus
generalizing the decidability of coverability for lossy channel
systems \cite{DBLP:conf/lics/AbdullaJ93} but also generalizing a much
older result by Arnold and Latteux~\cite[Theorem 5, p.~391]{AL-79},
published in French and thus less accessible, stating that
coverability for vector addition systems in the presence of resets is
decidable. It is interesting to note that the algorithm used by Arnold
and Latteux in 1979 is an instance of the backward algorithm presented
in \cite{AbdullaCJT96} and applied to $\N^n$.

The usefulness of the WSTS stemmed from its clear abstract treatment
of the properties responsible for the decidability of coverability,
termination and boundedness.  This provided the impetus for an
intensive development of the theory of WSTS, begun in the
year 2000 (see
~\cite{Finkel&Schnoebelen:01,DBLP:journals/iandc/AbdullaCJT00} for
surveys and \cite{DBLP:conf/rta/BertrandDKSS12,DBLP:conf/gg/KonigS12,
  DBLP:conf/fossacs/WiesZH10,DBLP:conf/vmcai/ZuffereyWH12, esparza99,
  KaiserKW12,DBLP:conf/apn/GeeraertsHPR13} for a sample of recent applications of the WSTS).  WSTS
remain under development and are actively being investigated
\cite{Finkel&Goubault-Larrecq:09:a,Finkel&Goubault-Larrecq:09:b,DBLP:journals/jcss/GeeraertsRB06,SS-concur13,BS-fmsd2012,SS-icalp11}.

At its core, a WSTS is simply an infinite set $X$ (of states) with a
transition relation $\rightarrow\ \subseteq X \times X$. The set $X$
is quasi-ordered by $\leq$, and $\rightarrow$ fulfills one of various
possible monotonicities, i.e.\ compatibilities with $\leq$. The
quasi-ordering of $X$ is further assumed to be well,
i.e.\ well-founded and with no infinite antichains (see
Section~\ref{sect:prelim} for precise formal definitions).

Over the years, a number of strengthenings and weakenings of the
notion of monotonicity (of $\rightarrow$ w.r.t.\ $\leq$) were
introduced, with the goal of allowing WSTS to capture ever more
models~\cite{Finkel&Schnoebelen:01}. But to the best of our knowledge,
the wellness hypothesis attached to the quasi-ordering of $X$ was
never questioned, apparently under the assumption that wellness surely
ought to be the weakest possible hypothesis that would allow deducing
any form of decidability property.

Our main contribution is to prove the above assumption
unjustified. Indeed, we show that the wellness assumption in the
definition of WSTS can be relaxed while some decidabilities are
retained. More precisely, wellness in a quasi-ordering is equivalent
to the following two properties being fulfilled simultaneously:
\begin{itemize}
\item well-foundedness, i.e., the absence of an infinite
  descending sequence of elements, and
\item finiteness of antichains, i.e., the absence of infinite sets of
  pairwise incomparable elements.
\end{itemize}

\noindent We show that dropping well-foundedness from the definition of a WSTS
(resulting in a ``WBTS'') still allows deciding the
coverability problem, even in the presence of infinite
branching. Indeed, while the usual backward
algorithm~\cite{AbdullaCJT96} for coverability relies on
well-foundedness, the forward algorithm described here does not
require that property!

For example, the set $\Z$ of integers with increment and decrement as
its transitions defines a WBTS that is not a WSTS. Another example of
a WBTS that is not a WSTS is that of a vector addition system with
domain $\Z^d$ (hence without guards) rather than $\N^d$ and with
$d$-tuples ordered by building on the usual $\Z$-ordering
lexicographically rather than componentwise.  Yet a less artificial
example introduced in this paper is that of a \emph{weighted vector
  addition system}, defined as a normal $d$-VASS (over $\N^d$)
extended with a $\Z^w$-component ordered lexicographically (see
Sect.\ \ref{sect:cov} for precise definition and semantics).

Having defined WBTS, we argue that no general backward strategy would
apply to determine coverability for WBTS.  Our first contribution is
to nonetheless show the coverability problem for WBTS decidable, by
the use of a forward strategy.  Coverability is thus decidable for
each model mentioned in previous paragraph, sparing us the need for
separate independent arguments.

Deciding any computational problem, for a general class of WBTS,
naturally requires that the class verify a number of effectiveness
conditions.  One such condition in the case of coverability is the
need to be able to manipulate downward closed subsets of the system
domain.  Verifying this condition for weighted VASS requires an
analysis of the subsets of $\Z^d$ that are downward closed under the
lexicographical ordering. Elucidating the ideal structure of such
downward closed subsets of $\Z^d$ is our second contribution.

Our third contribution is to contrast WBTS and WSTS from the point of
view of the termination and boundedness problems.  As expected, under
monotonicity conditions that ensure decidability of termination and
boundedness for WSTS, we exhibit WBTS for which both problems are
undecidable.  By comparison, we investigate monotonicity conditions
that, even in WBTS, allow one to decide termination (in the finitely
branching case) and boundedness (in both the finitely and the
infinitely branching cases).

The paper is organized as follows. Section~\ref{sect:prelim}
introduces terminology. Section~\ref{sect:WBTS} defines well-behaved
transition systems, gives our first example of WBTS, defines
effectiveness and studies downward closed sets, including those of
$\Z^d$ under the lexicographical ordering.  Section~\ref{sect:cov}
proves coverability decidable for WBTS and defines the weighted VASS
model.  Section~\ref{sect:termination} compares the WSTS and the WBTS
from the point of view of the decidability of the termination and
boundedness problems.  Section~\ref{sect:conc} concludes with a
discussion and future work.

\section{Preliminaries}\label{sect:prelim}

\subsection{Orderings}

Let $X$ be a set and let $\leq\ \subseteq X \times X$. The relation
$\leq$ is a \emph{quasi-ordering} if it is reflexive and
transitive. If $\leq$ is additionally antisymmetric, then $\leq$ is a
\emph{partial order}. The set $X$ is \emph{well-founded (under
  $\leq$)} if there is no infinite strictly decreasing sequence $x_0 >
x_1 > \dots$ of elements of $X$. An \emph{antichain (under $\leq$)} is
a subset $A \subseteq X$ of pairwise incomparable elements, i.e.\ for
every $a, b \in A$, $a \not\leq b$ and $b \not\leq a$. We say that a
quasi-ordering $\leq$ is a \emph{well-quasi-ordering} for $X$ if $X$
is well-founded and contains no infinite antichain under $\leq$. Let
$A \subseteq X$, we define the \emph{downward closure} and
\emph{upward closure} of $A$ respectively as $\upc{A} \defeq \{x \in X
: x \geq a \text{ for some } a \in A\}$ and $\downc{A} \defeq \{x \in
X : x \leq a \text{ for some } a \in A\}$. A subset $A \subseteq X$ is
said to be \emph{downward closed} if $A = \downc{A}$ and \emph{upward
  closed} if $A = \upc{A}$. We say that a subset $B \subseteq A$, of
an upward closed set $A$, is a \emph{basis} of $A$ if $A =
\upc{B}$. An \emph{ideal} is a downward closed subset $I \subseteq X$
that is also \emph{directed}, i.e.\ it is nonempty and for every $a, b
\in I$, there exists $c \in I$ such that $a \leq c$ and $b \leq
c$. The set of ideals of $X$ is denoted $\ideals(X) \defeq \{I
\subseteq X : I = \downc{I} \text{ and } I \text{ is directed}\}$.

\subsection{Transition systems and effectiveness}

A \emph{transition system} is a pair $\S = (X, \srun)$ such that $X$
is a set whose elements are called the \emph{states} of $\S$, and a
\emph{transition relation} $\srun\ \subseteq X \times X$. We extend a
transition relation $\srun$ to $$\srunk{k}\ \defeq\ \underbrace{\srun
  \circ \srun \circ \dots \circ \srun}_{k \text{
    times}},\ \srunk{+}\ \defeq\ \bigcup_{k \geq 1} \srunk{k} \text{
  and } \srunk{*}\ \defeq\ \textsf{Id}\ \cup \srunk{+}$$ where
$\textsf{Id}$ is the identity relation. For every $x \in X$,
$\spost(x) \defeq \{y \in X : x \srun y\}$ and $\spre(x) \defeq \{y
\in X : y \srun x\}$ denote respectively the sets of \emph{immediate
  successors} and \emph{predecessors} of $x$. Similarly, for every $x
\in X$, $\sposts(x) \defeq \{y \in X : x \srunk{*} y\}$ and $\spres(x)
\defeq \{y \in X : y \srunk{*} x\}$ denote respectively the sets of
\emph{successors} and \emph{predecessors} of $x$. A transition system
is \emph{finitely branching} if $\spost(x)$ is finite for every state
$x$, otherwise it is \emph{infinitely branching}. An \emph{ordered
  transition system} $\S = (X, \srun, \leq)$ is a transition system
$(X, \srun)$ equipped with a quasi-ordering $\leq\ \subseteq X \times
X$. We naturally extend $\spost, \spre, \sposts$ and $\spres$ to
subsets of states, e.g.\ for $A \subseteq X$ we have $\sposts(A) =
\bigcup_{x \in A} \sposts(x)$.

A \emph{class $\C$ of transition systems} is any countable set of
transition systems. We denote the $i^\text{th}$ transition system of a
class $\C$, for some fixed enumeration, by $\C(i)$. For every class
$\C$ we require the existence of a set $\textsf{Enc}_\C \subseteq \N$
and a surjective \emph{representation map} $r : \textsf{Enc}_\C
\rightarrow \bigcup_i X_i$ where $X_i$ is the set of states of
$\C(i)$. Let $\textsf{Enc}_{X_i} = \{e \in \textsf{Enc}_\C : r(e) \in
X_i\}$, we further require the set $\{(i, e) : i \in \N, e \in
\textsf{Enc}_{X_i}\}$ to be decidable. A Turing machine $M$ over $\N
\times \N$ is said to \emph{compute} a relation $\rho \subseteq X_i
\times X_i$ if $M$ halts at least on $\textsf{Enc}_{X_i} \times
\textsf{Enc}_{X_i}$ and for each $e, e'\in \textsf{Enc}_{X_i}$, $M$
accepts $(e, e') \iff (r(e), r(e')) \in \rho$.

A class $\C$ of ordered transition systems is \emph{effective} if
there exists a pair of Turing machines $(M_{\rightarrow}, M_{\leq})$
operating on $\N \times \N \times \N$ such that, for each $i \in \N$,
$M_{\rightarrow}$ with first argument set to $i$ computes the
transition relation ``$\srun$'' of $\C(i)$ and $M_{\leq}$ with first
argument set to $i$ computes the ordering relation ``$\leq$'' of
$\C(i)$. We say that $\C$ is \emph{post-effective} if it is effective,
and if there exists an additional Turing machine that computes
$|\post_{\C(i)}(x)| \in \N \cup \{\infty\}$ on input $(i, x)$, with $i
\in \N$ and $x \in X_i$. Such a Turing machine, in combination with
$M_{\rightarrow}$, allows computing $\spost(x)$ whenever the latter is
finite. We say that $\C$ is \emph{upward pre-effective} if it is
effective, and if there exists an additional Turing machine that
computes a finite basis of $\upc{\pre_{\C(i)}(\upc{x})}$ on input $(i,
x)$, where $i \in \N$ and $x$ is a state of $\C(i)$. By extension, we
say that an ordered transition system $\S$ is \emph{effective}
(resp.\ \emph{post-effective}, \emph{upward pre-effective}) if the
degenerate class $\{\S\}$ is effective (resp.\ post-effective, upward
pre-effective).

Just as the states of an ordered transition system are encoded over
the natural numbers, we 
assume the existence of a representation map for ideals, and
that testing whether a natural number encodes an ideal under this map
is decidable.

\subsection{Monotone and well-structured transition systems}

Let $\S = (X, \srun, \leq)$ be an ordered transition system and
$\boweq \in \{\geq, \leq\}$. We say that $\S$ is \emph{(upward)
  monotone} if $\boweq$ is $\geq$ (resp.\ \emph{downward monotone} if
$\boweq$ is $\leq$) and if for every $x, x', y \in X$,
\begin{align}
  x \srun y \land x' \boweq x &\implies \exists y' \boweq y \text{ s.t. }
  x' \srunk{*} y'\ .\label{eq:monotonicity}
\end{align} We will consider variants of monotonicity
that were introduced in the literature by
modifying~(\ref{eq:monotonicity}) as follows:
\begin{align*}
  \text{transitive monotonicity:} && x \srun y \land x' \boweq x
  &\implies \exists y' \boweq y \text{ s.t. } x'
  \mathmakebox[17pt][l]{\srunk{+}} y' \\
  \text{strong monotonicity:} && x \srun y \land x' \boweq x &\implies
  \exists y' \boweq y \text{ s.t. } x' \mathmakebox[17pt][l]{\srun} y'
\end{align*}
Let $\bow \in \{>, <\}$ be the strict variant of $\boweq$.
For any one of the above monotonicities, an ordered
transition system is said to be \emph{strictly monotone} (with respect
to the relevant monotonicity) if it \emph{additionally} satisfies, for
every $x, x', y \in X$,
\begin{align*}
  x \srun y \land x' \bow x &\implies \exists y' \bow y \text{ s.t. }
  x' \srunk{\#} y'
\end{align*}
where $\# \in \{*, +, 1\}$ is in accord with the relevant
monotonicity. Note that strong monotonicity implies transitive
monotonicity which implies (standard) monotonicity.

\begin{defi}[{\cite{F90}}]\label{def:wsts}
  A \emph{well structured transition system (WSTS)} is a monotone
  transition system $S = (X, \srun, \leq)$ such that $X$ is
  well-quasi-ordered by $\leq$.
\end{defi}

The notion of downward monotonicity, perhaps
less known, has been introduced in~\cite{Finkel&Schnoebelen:01} to
study so-called \emph{downward WSTS} and has been used, for example,
to analyze timed alternating automata in~\cite{OW07}.

\section{Beyond WSTS: Well Behaved Transition Systems}\label{sect:WBTS}

We generalize well-structured transition systems by weakening the
well-quasi-ordering constraint. Instead, we consider monotone
transition systems ordered by quasi-orderings with no infinite
antichains. That is, we no longer require the ordering to be
well-founded:

\begin{defi}
  A \emph{well behaved transition system (WBTS)} is
  a monotone transition system $S = (X, \srun, \leq)$ such that $(X, \leq)$
  contains no infinite antichain.
\end{defi}

It is clear from the definition that every WSTS is a WBTS, however the
converse is not true. For example, consider automata that can increase
or decrease a single counter whose value ranges over $\Z$. Such
integer one-counter automata are readily seen to be WBTS, however they
are not WSTS since $\Z$ contains infinite strictly decreasing
sequences. WBTS can, in particular, be built from the (classical)
lexicographical ordering over finite words or integer tuples. These
orderings cannot be used in the setting of WSTS since they are not
well-founded, but are allowed in WBTS since these orderings do not
induce infinite antichains. WBTS are also closed under ordering
reversal, which is not the case of WSTS. More precisely, for an
ordered transition system $\S = (X, \srun, \leq)$, we define the
\emph{ordering reversal} of $\S$ as $\S^\textsf{ord-rev} \defeq (X,
\srun, \geq)$. It is easily seen that $\S$ is a WBTS with upward
monotonicity if, and only if, $\S^\textsf{ord-rev}$ is a WBTS with
downward monotonicity. In general, WSTS are not closed under ordering
reversal since the well-foundedness of an ordering is not necessarily
preserved when it is reversed, e.g.\ $\N$ is well-quasi-ordered by
$\leq$, but $0 < 1 < 2 < \dots$ is an infinite strictly decreasing
sequence over $\geq$.

\subsection{An example of WBTS}\label{sec:example:wbts}

As a proof of concept, and to build intuition, we exhibit a class of
WBTS that satisfies the monotonicities presented. This class is based
on integer vector addition systems with states that were recently
studied in~\cite{HH14,CHH14,BFGHM15}. An \emph{integer vector addition
  system with states} (\emph{$\Z^d$-VASS}) is a pair $\V = (Q, T)$
such that $Q$ and $T$ are finite sets, and $T \subseteq Q \times \Z^d
\times Q$ where $d > 0$. Sets $Q$ and $T$ are respectively called the
\emph{control states} and \emph{transitions} of $\V$. Intuitively, a
$\Z^d$-VASS is a vector addition systems with states (VASS), a model
equivalent to Petri nets, but in which the counters of the VASS may
drop below zero. Formally, a $\Z^d$-VASS induces a transition system
$(Q \times \Z^d, \srun)$ such that $(p, \vec{u}) \srun (q, \vec{v})
\defiff \exists (p, \vec{z}, q) \in T \text{ s.t. } \vec{v} = \vec{u}
+ \vec{z}$. The set of states $Q \times \Z^d$ of these systems is
typically ordered by equality on $Q$ and the usual componentwise
ordering of $\Z^d$, and are therefore neither well-founded nor without
infinite antichains. However, we show that $\Z^d$-VASS are WBTS when
ordered lexicographically, i.e.\ under $\preceqlex$ where $(p,
\vec{u}) \preceqlex (p', \vec{u}') \defiff p = p' \land \vec{u}
\leqlex \vec{u}'$ for $\leqlex$ the usual lexicographical ordering,
i.e.\ $\vec{u} \leqlex \vec{u}' \defiff \vec{u} = \vec{u}' \lor
\exists i \text{ s.t. } \vec{u}(i) < \vec{u}'(i) \land \forall j < i,
\vec{u}(j) = \vec{u}'(j)$.

\begin{prop}
  $\Z^d$-VASS ordered by $\preceqlex$ are WBTS with upward,
  downward, strong and strict monotonicity.
\end{prop}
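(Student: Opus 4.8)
The plan is to separate the two requirements: the absence of infinite antichains on the one hand, and the four monotonicity conditions on the other.

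For antichains, the key observation is that $\leqlex$ is a \emph{total} order on $\Z^d$: given distinct $\vec{u}, \vec{u}'$, there is a first coordinate $i$ at which they differ, and the sign of the difference there determines which vector is strictly smaller. Consequently, two states $(p, \vec{u})$ and $(p', \vec{u}')$ are $\preceqlex$-incomparable only when $p \neq p'$, since equal control states force comparability of the $\Z^d$-components. Any antichain therefore contains at most one state per control state, so its cardinality is bounded by $|Q|$; as $Q$ is finite, no infinite antichain exists.

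For monotonicity I would first isolate the single algebraic fact that drives all four claims: the lexicographic ordering on $\Z^d$ is invariant under translation by a fixed vector, i.e.\ $\vec{u} \leqlex \vec{u}'$ iff $\vec{u} + \vec{z} \leqlex \vec{u}' + \vec{z}$, and likewise $\vec{u} \llex \vec{u}'$ iff $\vec{u} + \vec{z} \llex \vec{u}' + \vec{z}$. This is immediate because the componentwise difference $\vec{u}' - \vec{u}$ is unaffected by adding $\vec{z}$ to both vectors, so the first index at which the two vectors differ, together with the sign of the difference there, is preserved. With this in hand, all four monotonicities follow uniformly and with the \emph{strong} and \emph{strict} strengthenings simultaneously. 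Suppose $(p, \vec{u}) \srun (q, \vec{v})$, witnessed by a transition $t = (p, \vec{z}, q) \in T$ with $\vec{v} = \vec{u} + \vec{z}$. Because a $\Z^d$-VASS is guardless (its counters range over all of $\Z$, so $t$ is enabled at $(p, \vec{u}')$ for \emph{every} $\vec{u}'$), we may replay the \emph{same} transition $t$ from any comparable state $(p, \vec{u}')$ to obtain $(p, \vec{u}') \srun (q, \vec{u}' + \vec{z})$ in a single step. If $\vec{u} \leqlex \vec{u}'$ then translation-invariance gives $(q, \vec{v}) \preceqlex (q, \vec{u}' + \vec{z})$, yielding strong upward monotonicity; if $\vec{u}' \leqlex \vec{u}$ it gives $(q, \vec{u}' + \vec{z}) \preceqlex (q, \vec{v})$, yielding strong downward monotonicity; and replacing $\leqlex$ by $\llex$ in either case yields the corresponding strict variant. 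Since the witness is always a single transition, the strong forms are obtained directly, and these imply the standard monotonicity required by the definition of WBTS.

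The proof presents essentially no obstacle: the only point requiring care is justifying that the identical transition $t$ can be fired regardless of the current counter values, which is exactly where the absence of guards in the $\Z^d$ semantics is used.
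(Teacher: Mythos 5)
Your proof is correct and follows essentially the same route as the paper's: bound antichains by $|Q|$ using totality of $\leqlex$, then replay the identical (guardless) transition from the comparable state and observe that adding $\vec{z}$ to both vectors preserves the first differing coordinate and the sign of the difference there, giving the strong and strict variants in one step. The only cosmetic difference is that you isolate translation-invariance of $\leqlex$ and $\llex$ as a standalone fact, whereas the paper verifies it inline by comparing components up to the first index where $\vec{u}$ and $\vec{u}'$ differ.
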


\begin{proof}
  Let $\V = (Q, T)$ be a $\Z^d$-VASS. First note that any antichain of
  $Q \times \Z^d$ is of length at most $|Q|$. It remains to show that
  $\V$ is monotone. Let $(p, \vec{u}), (q, \vec{v}), (p', \vec{u}')
  \in Q \times \Z^d$ be such that $(p, \vec{u}) \srun (q, \vec{v})$
  and $(p, \vec{u}) \llex (p', \vec{u}')$. There exists $(p, \vec{z},
  q) \in T$ such that $\vec{v} = \vec{u} + \vec{z}$. By definition of
  $\preceqlex$, $p' = p$, hence $(p', \vec{u}') \srun (q, \vec{v}')$ for
  $\vec{v}' \defeq \vec{u}' + \vec{z}$. Let $1 \leq i \leq d$ be the
  smallest component such that $\vec{u}(i) \not= \vec{u}'(i)$. Since
  $\vec{u} \llex \vec{u}'$, we have $\vec{u}(i) < \vec{u}'(i)$, hence
  $\vec{v}(j) = \vec{u}(j) + \vec{z}(j) = \vec{u}'(j) + \vec{z}(j) =
  \vec{v}'(j)$ for every $1 \leq j < i$, and $\vec{v}(i) = \vec{u}(i)
  + \vec{z}(i) < \vec{u}'(i) + \vec{z}(i) = \vec{v}'(i)$. Therefore,
  $\vec{v} \llex \vec{v}'$ and consequently $(q, \vec{v}) \preceqlex
  (q, \vec{v}')$. Thus, $\V$ has upward, strong and strict
  monotonicity. Downward monotonicity follows symmetrically by
  considering $\glex$ instead of $\llex$.
\end{proof}

\subsection{Decomposition of downward closed sets into finite unions of ideals}

It was observed
in~\cite{Finkel&Goubault-Larrecq:09:a,FGL16,BFM14,BFM16} that any
downward closed subset of a well-quasi-ordered set is equal to a
finite union of ideals, which led to further applications in the study
of WSTS. Here we stress the fact that such finite decompositions also
exist in quasi-ordered sets with no infinite antichain. The existence
of such a decomposition has been proved numerous times (for partial
orderings instead of quasi-orderings) in the order theory
community~\cite{Bonnet,Pouzet79a,PZ85,F86,LMP87} under different
terminologies, and is a particular case of a more general set theory
result of Erdős \& Tarski~\cite{ET43} on the existence of \emph{limit
  numbers} between $\aleph_0$ and $2^{\aleph_0}$. We extract from
Bonnet~\cite{Bonnet} and Fra\"iss\'e ~\cite{F86} a simple proof
tailored to our situation. Specifically, our proof is based on the
fact that such decompositions exist in well-quasi-ordered sets and is
reminiscent of Fraïssé's proof strategy~\cite[Sect.~4.7.2,
  p.~124]{F86}, which is based on~\cite[Lemma~2, p.~193]{Bonnet}.

\begin{thm}[\cite{ET43,Bonnet,Pouzet79a,PZ85,F86,LMP87}]\label{thm:antichains:ideals}
  A countable quasi-ordered set $X$ contains no infinite antichain if,
  and only if, every downward closed subset of $X$ is equal to a
  finite union of ideals.
\end{thm}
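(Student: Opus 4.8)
The plan is to prove the two implications separately, disposing of the converse first. For the ``if'' direction I would argue by contraposition: suppose $\{a_0, a_1, \dots\}$ is an infinite antichain and set $D \defeq \downc \{a_i : i \in \N\}$, which is downward closed. By hypothesis $D = I_1 \cup \dots \cup I_n$ for ideals $I_k$, so by the pigeonhole principle some $I_k$ contains two distinct antichain elements $a_p, a_q$. Directedness of $I_k$ yields a common upper bound $c \in I_k \subseteq D$, and since $c \in D$ we have $c \leq a_r$ for some $r$. Then $a_p \leq c \leq a_r$ and $a_q \leq c \leq a_r$; as $r$ can equal at most one of $p, q$, at least one of these is a comparison between two distinct elements of the antichain, a contradiction.

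For the harder ``only if'' direction I would first reduce the statement to a single downward closed set, namely $X$ itself. Given any downward closed $D \subseteq X$, the induced subposet $D$ again has no infinite antichain, and one checks that the ideals of $D$ are exactly the ideals of $X$ contained in $D$ (using that $D$ is downward closed in $X$). Hence decomposing $D$ as a finite union of ideals of $X$ is the same as decomposing the space $D$ into its own ideals, and it suffices to prove: a countable quasi-ordered set with no infinite antichain is a finite union of ideals.

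To prove this I would work with the \emph{maximal} ideals of $X$ (maximal with respect to inclusion). Every element $x$ lies in some ideal, since $\downc\{x\}$ is directed and downward closed, and a routine Zorn's lemma argument — unions of $\subseteq$-chains of ideals are again ideals — shows every ideal is contained in a maximal one. Thus the maximal ideals cover $X$, and everything reduces to showing that there are only \emph{finitely many} maximal ideals. This is exactly the point at which I would invoke the known decomposition theorem for well-quasi-ordered sets as the base case and follow the Bonnet--Fra\"iss\'e strategy: assuming infinitely many distinct maximal ideals $I_0, I_1, \dots$, I would select in each a witness that distinguishes it from the others and, using that no infinite antichain exists (so that a Ramsey-type extraction would otherwise force every infinite set of witnesses to contain an infinite \emph{chain} rather than an antichain), coordinate these choices into an infinite antichain, contradicting the hypothesis.

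The main obstacle is precisely this last finiteness claim for maximal ideals. The well-quasi-ordered proof does not transfer verbatim: there one decomposes $D$ by reading off the finitely many minimal elements of the upward closed complement $X \setminus D$, but without well-foundedness that complement may have no minimal elements at all (e.g.\ in $\Z$). The delicate part is therefore the witness selection — one must pick, for each maximal ideal, an element incomparable to the witnesses of the others, so that the chains permitted by the no-infinite-antichain hypothesis are excluded and a genuine infinite antichain is produced. This is the heart of Bonnet's Lemma, and the remainder of the argument is bookkeeping around it.
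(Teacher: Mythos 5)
Your ``if'' direction is correct and is essentially identical to the paper's argument (downward closure of the antichain, pigeonhole into one ideal, directedness plus $c\leq a_r$ forcing two distinct antichain elements to be comparable). The problem lies entirely in the ``only if'' direction, where you have a genuine gap rather than a proof. Your reduction to showing that $X$ itself is a finite union of ideals is fine, as is the Zorn's lemma observation that every element lies in a maximal ideal (unions of chains of ideals are ideals). But this reduces the theorem to the claim that there are only finitely many maximal ideals, and that claim is essentially \emph{equivalent} to the theorem: you have relocated the difficulty, not resolved it, and your sketch of the witness-selection step does not go through as described. To make all witnesses pairwise incomparable you would need $w_i \in I_i \setminus \bigcup_{j \neq i} I_j$ (then comparability of $w_i$ and $w_j$ would indeed put one of them in the other's ideal by downward closure); but Proposition~\ref{prop:inclusdansunion} only rules out covering a directed set by \emph{finitely} many downward closed sets, and a maximal ideal can perfectly well be contained in the union of infinitely many others (compare $\N = \bigcup_j \downc{j}$). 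The weaker choice $w_i \in I_i \setminus \bigcup_{j<i} I_j$, which is always possible, only yields $w_j \not\leq w_i$ for $i<j$; a Ramsey extraction then leaves you with the alternative of an infinite \emph{ascending} chain meeting infinitely many distinct maximal ideals, which is not by itself a contradiction. You acknowledge this yourself by calling it ``the heart of Bonnet's Lemma'' and deferring to it --- but that lemma is precisely what needed to be proved here, so the proposal is circular at its crucial point.

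For contrast, the paper avoids maximal ideals altogether. Given a downward closed $D=\{d_0,d_1,\dots\}$, it greedily extracts $x_0, x_1, \dots$ with $x_i$ the first enumerated element outside $\downc{\{x_0,\dots,x_{i-1}\}}$, so that $x_i \not\leq x_j$ whenever $j<i$. This property kills every infinite strictly descending sequence in $D'=\{x_i : i\in\N\}$, so $D'$ is well-founded, hence well-quasi-ordered (no infinite antichains being inherited from $X$); the known ideal-decomposition theorem for wqo sets then applies to $D'$, and since $D=\downc{D'}$ the finitely many ideals of $D'$ push down to finitely many ideals of $X$ covering $D$. If you want to salvage your maximal-ideal framework, the cleanest repair is to prove the finiteness of maximal ideals \emph{from} this decomposition (each maximal ideal is contained in, hence equal to, one of the finitely many ideals in the decomposition of $X$, by Proposition~\ref{prop:inclusdansunion}), but then the extraction argument is doing all the work and the maximal ideals are a corollary rather than a tool.
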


\begin{proof}
  Let $X$ be a countable set quasi-ordered by $\leq$.

  \emph{Only if.} If $X$ is finite, the claim follows
  immediately. Suppose that $X$ is infinite and contains no infinite
  antichain. Let $D \subseteq X$ be a downward closed subset of $X$
  and let $D = \{d_0, d_1, \ldots\}$. We build a well-quasi-ordered
  subset $D' \subseteq D$. First, let us iteratively build a sequence
  of elements $(x_i)_{i \in \N}$ and a sequence of subsets $(D_i)_{i
    \in N}$. Let $D_0 \defeq D$ and $x_0 \defeq d_0$. For every $i >
  0$, let
  \begin{align*}
    D_i &\defeq D_{i-1} \setminus \downc{x_{i-1}}, \text{ and} \\
    x_i &\defeq d_j \text{ where $j$ is the smallest index such
      that $d_j \in D_i$}.
  \end{align*}
  Let $D' \defeq \{x_i : i \in \N \}$, let $\leq'$ be the
  quasi-ordering $\leq$ restricted to $D'$, and let $\downc'$ denote
  the downward closure under $\leq'$. We argue that $D'$ is
  well-quasi-ordered by $\leq'$. Recall that $(D', \leq')$ has no
  infinite antichain by hypothesis on $X$. We show that $(D', \leq')$
  is well-founded. By construction of $D'$, the following holds:
  \begin{align}
    x_i \not\leq x_j \text{ for every } i \in \N, j <
    i.\label{eq:not:leq}
  \end{align}
  Suppose that $D'$ contains an infinite strictly decreasing sequence:
  \begin{align}
    x_{i_0} > x_{i_1} > \ldots \label{eq:inf:desc}
  \end{align}
  where $i_j \neq i_k$ for every $j \neq k$. Since the set of indices
  $\{i_k : k \geq 0 \}$ is infinite, there necessarily exists an
  integer $k$ such that $i_0 < i_k$. Together with~(\ref{eq:not:leq}),
  this implies that $x_{i_0} \not\geq x_{i_k}$, which
  contradicts~(\ref{eq:inf:desc}).
  Therefore, $D'$ is well-founded under $\leq'$, which in turn implies
  that $D'$ is well-quasi-ordered by
  $\leq'$. By~\cite{F86,Finkel&Goubault-Larrecq:09:a,BFM16}, there
  exist $I_1, I_2, \ldots, I_k \in \ideals(D')$ such that $\downc'{D'}
  = I_1 \cup I_2 \cup \dots \cup I_k$.

  We claim that $D \subseteq \downc{D'}$, and hence that $D =
  \downc{D'}$. If $D = D'$, then the claim holds
  immediately. Otherwise, let $y \in D \setminus D'$. By construction
  of $D'$, $y < x_i$ for some $i \in \N$, and hence $y \in \downc{x_i}
  \subseteq \downc{D'}$. This implies that $D \subseteq
  \downc{D'}$.

  Therefore, $$D = \downc{D'} = \downc{(I_1 \cup I_2 \cup \dots \cup
    I_k)} = \downc{I_1} \cup \downc{I_2} \cup \dots \downc{I_k}\ .$$
  To conclude, it suffices to show that $\downc{I_i} \in \ideals(X)$
  for each $1 \leq i \leq k$. Obviously, $\downc{I_i}$ is downward
  closed, hence it suffices to show that it is directed. Let $a, b \in
  \downc{I_i}$, there exist $a', b' \in I_i$ such that $a \leq a'$ and
  $b \leq b'$. Since $I_i \in \ideals(D')$, there exists $c \in I_i$
  such that $a' \leq' c$ and $b' \leq' c$. Thus, $a \leq a' \leq c$
  and $b \leq b' \leq c$. Therefore, $\downc{I_i} \in \ideals(X)$ and
  we are done.

  \emph{If.} Conversely, suppose that there exists an infinite
  antichain $A \subseteq X$. We prove that there exists a downward
  closed subset $D \subseteq X$ that is not equal to a finite union of
  ideals. Let $D = \bigcup_{a \in A} \downc{a}$. Assume that there
  exist $I_1, I_2, \ldots, I_k \in \ideals(X)$ such that $D = I_1 \cup
  I_2 \cup \dots \cup I_k$. By the pigeonhole principle, there exists
  some $1 \leq i \leq k$ such that $I_i$ contains infinitely many
  elements from $A$. Let $a, b \in I_i$ be distinct elements. Since
  $I_i$ is directed, there exists $c \in I_i$ such that $a \leq c$ and
  $b \leq c$. Moreover, since $I_i \subseteq D$, there exists some $a'
  \in A$ such that $c \leq a'$. Thus, $a \leq a'$ and $b \leq
  a'$. Because $a$ and $b$ are distinct, at least two distinct
  elements of $A$ are comparable, i.e.\ either $a$ and $a'$, or $b$
  and $a'$. Therefore, $A$ is not an antichain, which is a
  contradiction, and hence $X$ has no infinite antichain.
\end{proof}

Let $X$ be a set quasi-ordered by an ordering $\leq$ having no
infinite antichain. Theorem~\ref{thm:antichains:ideals} allows us as
in~\cite{BFM14} to define a canonical finite decomposition of a
downward closed subset $D \subseteq X$, that is, the (finite) set
$\idealdecomp(D)$ of maximal ideals contained in $D$ under inclusion.

\subsection{Effectiveness of downward closed sets}

In this subsection, we describe effectiveness hypotheses that allow
manipulating downward closed sets in ordered transition systems.

\begin{defi}
  A class $\C$ of WBTS is \emph{ideally effective} if, given
  $\mathcal{S} = (X, \srun{}, \leq) \in \C$,
  \begin{itemize}
  \item the set of encodings of $\ideals(X)$ is recursive,
  \item the function mapping the encoding of a state $x \in X$ to the
    encoding of the ideal $\downc{x} \in \ideals(X)$ is computable;
  \item inclusion of ideals of $X$ is decidable;
  \item the downward closure $\downc{\spost(I)}$ expressed as a finite
    union of ideals is computable from the ideal $I \in \ideals(X)$.
  \end{itemize}
\end{defi}

\noindent Note that a class of WBTS is ideally effective if, and only if, the
class of its so-called completions~\cite{BFM14,BFM16} is
post-effective. The notion of completion naturally applies to WBTS,
but we do not use the notion in this paper.

Enforcing WBTS to be ideally effective is not an issue for all the
useful models of which we are aware. Indeed, a large scope of well
structured transition systems, hence of WBTS, are ideally effective
\cite{Finkel&Goubault-Larrecq:09:a}: Petri nets, VASS and their
extensions (with resets, transfers, affine functions), lossy channel
systems and extensions with data.

\definecolor{fillcolor}{rgb}{0.99,0.78,0.07}
\colorlet{fillcolor2}{cyan!80!blue}

\begin{figure}[h]
\begin{center}
  \begin{tikzpicture}
    \begin{axis}[xmin=-6.2, xmax=6.2, ymin=-6.2, ymax=6.65, xtick={-6,...,5}, ytick={-6,...,5}, yticklabels={,,}, xticklabels={,,}, axis y line=center, axis x line=middle, inner axis line style={black}, axis on top, thick, scale=0.7, transform shape]

      \addplot[black,fill=fillcolor, fill opacity=0.8, draw=none]
      coordinates {(1.2,-6.2) (1.2,6.2) (-6.2,6.2) (-6.2,-6.2)
        (1.2,-6.2)} \closedcycle;
      
      \addplot[mark=none, fillcolor, opacity=0.8, line width=4pt]
      coordinates {(2,3.2) (2,-6.2)};

      \foreach \x in {-6,...,1}{
        \foreach \y in {-6,...,6}{ 
          \addplot[mark=*, only marks, mark size=0.75pt, gray]
          coordinates {(\x,\y)};
        }
      }
      \foreach \y in {-6,...,3}{  
        \addplot[mark=*, only marks, mark size=0.75pt, gray]
        coordinates {(2,\y)};
      }
    \end{axis}
  \end{tikzpicture}\hspace{30pt}
  \begin{tikzpicture}
    \begin{axis}[xmin=-6.2, xmax=6.2, ymin=-6.2, ymax=6.65, xtick={-6,...,5}, ytick={-6,...,5}, yticklabels={,,}, xticklabels={,,}, axis y line=center, axis x line=middle, inner axis line style={black}, axis on top, thick, scale=0.7, transform shape]

      \addplot[black,fill=fillcolor, fill opacity=0.3, draw=none]
      coordinates {(1,-6.2) (1,6.2) (-6.2,6.2) (-6.2,-6.2) (1,-6.2)}
      \closedcycle;
      
      \addplot[mark=none, fillcolor, opacity=0.3, line width=4pt]
      coordinates {(2,3) (2,-6.2)};

      \addplot[black,fill=fillcolor2, fill opacity=0.9, draw=none]
      coordinates {(-1.8,-6.2) (-1.8,6.2) (-6.2,6.2) (-6.2,-6.2)
        (-1.8,-6.2)} \closedcycle;
      
      \addplot[mark=none, fillcolor2, opacity=0.9, line width=4pt]
      coordinates {(-1,5.2) (-1,-6.2)};

      \foreach \x in {-6,...,-2}{
        \foreach \y in {-6,...,6}{ 
          \addplot[mark=*, only marks, mark size=0.75pt, black!50!gray]
          coordinates {(\x,\y)};
        }
      }
      \foreach \y in {-6,...,5}{  
        \addplot[mark=*, only marks, mark size=0.75pt, black!50!gray]
        coordinates {(-1,\y)};
      }
    \end{axis}
  \end{tikzpicture}
\end{center}
\caption{Left: $\downclex{(2, 3)}$. Right: $\downclex{(-1, 5)} = \downclex{(2, 3)} + (-3, 2)$.}\label{fig:down:lex}
\end{figure}

As an example, we argue that $\Z^d$-VASS introduced in
Sect.~\ref{sec:example:wbts} form an ideally effective class of
WBTS. To do so, we need to investigate the downward and upward closed
sets of $\Z^d$ under $\preceqlex$. Since the control states are
ordered under equality, we may only consider $\leqlex$. Let $\vec{x} \in
\Z^d$, we give descriptions of $\downclex{\vec{x}}$ and
$\upclex{\vec{x}}$ where $\downclex$ and $\upclex$ denote respectively
the downward and upward closures under $\leqlex$. Let $\textsf{down}_d
: \Z^d \to 2^{\Z^d}$ be defined as follows
$$
\textsf{down}_d(x_1, x_2, \ldots, x_d) =
\begin{cases}
  \downc(x_1-1) \times \Z^{d-1} \cup \{x_1\} \times
  \textsf{down}_{d-1}(x_2, x_3, \ldots, x_d) & \text{if } d > 1
  \\ \downc{x_1} & \text{if } d = 1
\end{cases}
$$ and let $\textsf{up}_d : \Z^d \to 2^{\Z^d}$ be defined in the same
way by replacing $\downc$ with $\upc$. We have $\downclex{\vec{x}} =
\textsf{down}_d(\vec{x})$ and $\upclex{\vec{x}} =
\textsf{up}_d(\vec{x})$. For example, $\downclex(2, 3) = (\downc{1}
\times \Z) \cup (\{2\} \times \downc{3})$ is depicted on the left of
Fig.~\ref{fig:down:lex}.

In order to describe the ideals of $\Z^d$ under $\leqlex$, denoted
$\idealslex(\Z^d)$, we first make the following observation on
downward closed subsets:
\begin{prop}\label{prop:downward:ideal}
  Let $D \subseteq \Z^d$. If $D$ is downward closed under $\leqlex$,
  then $D \in \idealslex(\Z^d)$.
\end{prop}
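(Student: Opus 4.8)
The plan is to show that any downward-closed $D \subseteq \Z^d$ (under $\leqlex$) is directed, since it is already downward closed by hypothesis and nonemptiness is the only boundary case to dispatch. Thus I need to prove: for any $a, b \in D$ there exists $c \in D$ with $a \leqlex c$ and $b \leqlex c$. The key structural fact I would exploit is that $\leqlex$ is a \emph{total} (linear) order on $\Z^d$: for any two distinct tuples, comparing them at the first coordinate where they differ always yields a strict inequality in one direction. This totality is what makes the lexicographic case special and is the crux of the whole statement.

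First I would handle the degenerate case: if $D = \emptyset$, it is vacuously not an ideal, so the statement as phrased must intend $D$ to be nonempty (or I note that the convention throughout is that downward-closed sets of interest are nonempty; directedness explicitly requires nonemptiness). Assuming $D \neq \emptyset$, I would take arbitrary $a, b \in D$. By totality of $\leqlex$, either $a \leqlex b$ or $b \leqlex a$. Without loss of generality suppose $a \leqlex b$. Then simply set $c \defeq b \in D$; we have $a \leqlex c$ and $b \leqlex c$ trivially, so $D$ is directed. Combined with the hypothesis that $D$ is downward closed, this gives $D \in \idealslex(\Z^d)$.

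The step I expect to require the most care — though it is conceptually simple — is justifying that $\leqlex$ is genuinely total on all of $\Z^d$, i.e. that the recursive definition of $\leqlex$ given in the excerpt never leaves two tuples incomparable. I would verify this by induction on $d$: for $d = 1$ it is the totality of the usual order on $\Z$, and for $d > 1$ either the first coordinates differ (decided by totality on $\Z$) or they agree (reduce to the totality of $\leqlex$ on $\Z^{d-1}$ by induction). This is the genuine content behind the observation that $\Z^d$ under $\preceqlex$ has antichains of size at most one within each control state, which already appeared implicitly in the earlier proposition. Once totality is in hand, directedness is immediate, and no appeal to Theorem~\ref{thm:antichains:ideals} is even needed — the result is a direct consequence of linearity of the ordering rather than of the finite-antichain machinery.
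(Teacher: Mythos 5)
Your proof is correct and takes essentially the same route as the paper: the paper's argument is exactly that $\leqlex$ is total, so for any $\vec{u},\vec{v}\in D$ one of them serves as the upper bound, whence $D$ is directed. Your additional remarks (the induction establishing totality, and the observation that the empty set must be excluded for directedness) are sound elaborations of details the paper leaves implicit.
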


\begin{proof}
  Let $\vec{u}, \vec{v} \in D$. Since $\leqlex$ is total, $\vec{u}
  \leqlex \vec{v}$ or $\vec{v} \leqlex \vec{u}$, and hence $D$ is
  directed.
\end{proof}

$\idealslex(\Z^d)$ can be described as follows:

\begin{prop}\label{prop:ideals:description}
  $\idealslex(\Z^d) = X_d$ where
  $$X_d =
  \begin{cases}
    \left\{\Z^d\right\} \cup \left\{\downc(x-1) \times \Z^{d-1} \cup
    \{x\} \times I : x \in \Z, I \in X_{d-1}\right\} & \text{if $d >
      1$,} \\ \left\{\Z\right\}\phantom{{}^d} \cup \left\{\downc{x} : x
    \in \Z\right\} & \text{if $d = 1$.}
  \end{cases}
  $$
\end{prop}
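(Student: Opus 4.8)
The plan is to reformulate the claim: by Proposition~\ref{prop:downward:ideal} every subset of $\Z^d$ that is downward closed under $\leqlex$ is directed, hence an ideal, and conversely every ideal is by definition a nonempty downward closed set; so $\idealslex(\Z^d)$ is precisely the family of \emph{nonempty} downward closed subsets of $\Z^d$ under $\leqlex$. It therefore suffices to show that $X_d$ equals this family, which I would do by induction on $d$. For the base case $d = 1$, a nonempty downward closed $D \subseteq \Z$ either is bounded above, in which case it contains a greatest element $x$ and equals $\downc{x}$, or is unbounded above, in which case downward closure forces $D = \Z$; these are exactly the members of $X_1$.

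For the inductive step I assume $\idealslex(\Z^{d-1}) = X_{d-1}$, i.e.\ the nonempty downward closed subsets of $\Z^{d-1}$ are exactly $X_{d-1}$. The easy inclusion is that every member of $X_d$ is downward closed. This is immediate for $\Z^d$, and for $D = \downc(x-1) \times \Z^{d-1} \cup \{x\} \times I$ with $I \in X_{d-1}$ I would take $\vec{u} \in D$ and $\vec{w} \leqlex \vec{u}$ and check $\vec{w} \in D$ by splitting on whether the first coordinate of $\vec{u}$ is $\leq x-1$ or $= x$: in the first case the first coordinate of $\vec{w}$ is at most $x-1$; in the second case either $\vec{w}$ has first coordinate $< x$ (landing in $\downc(x-1) \times \Z^{d-1}$), or first coordinate $x$ with tail $\leqlex$-below the tail of $\vec{u}$, which lies in $I$ by downward closure of $I$.

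The substantive direction is the converse: every nonempty downward closed $D$ lies in $X_d$. The key device is the projection $\pi_1(D)$ of $D$ onto its first coordinate. I would first observe that $\pi_1(D)$ is a nonempty downward closed subset of $\Z$, since if $(u_1, \vec{u}') \in D$ and $w_1 < u_1$ then $(w_1, \vec{u}') \llex (u_1, \vec{u}') \in D$. By the base case, $\pi_1(D)$ is either $\Z$ or $\downc{x}$. If $\pi_1(D) = \Z$, then for any $(w_1, \vec{w}') \in \Z^d$ I pick an element of $D$ with first coordinate $w_1 + 1$; it is $\llex$-above $(w_1, \vec{w}')$, so $(w_1, \vec{w}') \in D$ and $D = \Z^d$. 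If $\pi_1(D) = \downc{x}$, then because any vector with first coordinate $< x$ is $\llex$-below every element of $D$ whose first coordinate equals $x$, downward closure gives $\downc(x-1) \times \Z^{d-1} \subseteq D$; and the top slice $I \defeq \{\vec{u}' : (x, \vec{u}') \in D\}$ is nonempty and downward closed in $\Z^{d-1}$ (since $(x, \vec{w}') \leqlex (x, \vec{u}')$ whenever $\vec{w}' \leqlex \vec{u}'$), hence $I \in X_{d-1}$ by the induction hypothesis. Sorting the elements of $D$ according to whether their first coordinate is $< x$ or $= x$ then yields $D = \downc(x-1) \times \Z^{d-1} \cup \{x\} \times I$, a member of $X_d$.

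The main obstacle I anticipate is the bookkeeping in this last case: one must verify that the decomposition is \emph{exact}, in the sense that no element of $D$ is missed and none is spurious. This hinges on the fact that $\leqlex$ compares first coordinates before anything else, so that the behaviour of $D$ strictly below the top coordinate $x$ is forced to be the full cylinder $\downc(x-1) \times \Z^{d-1}$, while the behaviour at level $x$ is governed entirely by the $(d-1)$-dimensional slice $I$. Everything else is routine induction and case analysis.
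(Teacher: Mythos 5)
Your proposal is correct and follows essentially the same route as the paper: induction on $d$, with the substantive direction driven by the projection of the ideal onto its first coordinate (the paper's set $F=\{\vec v(1):\vec v\in I\}$), a case split on whether that projection is bounded above, and Proposition~\ref{prop:downward:ideal} to convert between ideals and nonempty downward closed sets. Your up-front reformulation of $\idealslex(\Z^d)$ as the family of nonempty downward closed sets is a clean packaging of what the paper does implicitly, but it is not a different argument.
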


\begin{proof}
  We proceed by induction on $d$. The base case is immediate since
  $\leqlex$ coincides with $\leq$ for $d=1$. Let $d > 1$, suppose the
  claim holds for $d-1$.\medskip

  \noindent{``$\subseteq$''}: Let $I \in \idealslex(\Z^d)$ and let us
  show that $I \in X_d$. Let $F = \{\vec{v}(1) : \vec{v} \in I\}$. If
  $F$ is unbounded from above, then $I = \Z^d$ and trivially $I \in
  X_d$. Otherwise, let $x$ be the largest element of $F$. By downward
  closure of $I$ under $\leqlex$ and by definition of $\leqlex$,
  \begin{align}
    I &= \{\vec{v} \in \Z^d : \vec{v}(1) < x\} \cup \{[x\ \vec{u}] \in
    I : \vec{u} \in \Z^{d-1}\}\ . \label{eq:ideals:rec}
  \end{align}
  Let us show that $I' \defeq \{\vec{u} \in \Z^{d-1} : [x\ \vec{u}]
  \in I\}$ is downward closed under $\leqlex$, and hence that $I' \in
  \ideals(\Z^{d-1})$ by Prop.~\ref{prop:downward:ideal}. Let $\vec{u}
  \in I'$ and $\vec{u}' \leqlex \vec{u}$. We have $[x\ \vec{u}']
  \leqlex [x\ \vec{u}]$, hence $[x\ \vec{u}'] \in I$ by downward
  closure of $I$ under $\leqlex$, and thus $\vec{u}' \in
  I'$. Therefore, by~(\ref{eq:ideals:rec}), we have
  \begin{alignat*}{2}
    I &= \{\vec{v} \in \Z^d : \vec{v}(1) < x\} &&\cup \{[x\ \vec{u}] :
    \vec{u} \in I'\} \\
    &= \downc{(x-1)} \times \Z^{d-1} &&\cup \{x\} \times I'\ .
  \end{alignat*}
  By induction hypothesis, $I' \in X_{d-1}$, and thus $I \in
  X_d$.\medskip

  \noindent{``$\supseteq$''}: Let $I \in X_d$ and let us show that $I
  \in \idealslex(\Z^d)$. If $I = \Z^d$, then $I \in
  \idealslex(\Z^d)$. Assume that $I \not= \Z^d$, then by definition of
  $X_d$ and by induction hypothesis
  \begin{align*}
    I &= \downc(x-1) \times \Z^{d-1} \cup \{x\} \times I'
  \end{align*}
  for some $x \in \Z$ and $I' \in \idealslex(\Z^{d-1})$. Let us show
  that $I$ is downward closed under $\leqlex$, and hence that $I \in
  \idealslex(\Z^d)$ by Prop.~\ref{prop:downward:ideal}. Let $\vec{v}
  \in I$ and $\vec{v}' \leqlex \vec{v}$. If $\vec{v} \in \downc(x-1)
  \times \Z^{d-1}$, then $\vec{v}' \in \downc(x-1) \times \Z^{d-1}
  \subseteq I$ since $\vec{v}'(1) \leq \vec{v}(1)$. If $\vec{v} \in
  \{x\} \times I'$, then there are two cases to consider:

  \begin{itemize}
  \item If $\vec{v}'(1) < \vec{v}(1)$, then $\vec{v}'(1) \in
    \downc(x-1) \times \Z^{d-1} \subseteq I$.\\[-10pt]
  \item If $\vec{v}'(1) = \vec{v}(1)$, then there exist $\vec{u} \in
    I'$ and $\vec{u}' \in \Z^{d-1}$ such that $\vec{v} =
    [x\ \vec{u}]$, $\vec{v}' = [x\ \vec{u}']$ and $\vec{u}' \leqlex
    \vec{u}$. Since $I'$ is downward closed under $\leqlex$, we have
    $\vec{u}' \in I'$. Therefore, $\vec{v}' = [x\ \vec{u}'] \in \{x\}
    \times I' \subseteq I$. \qedhere
  \end{itemize}
\end{proof}

\noindent Ideals of $\Z^d$ can be categorized into $d+1$ types, as illustrated
in Fig.~\ref{fig:ideals:types} for $d=2$. By
Prop.~\ref{prop:downward:ideal}, ideals of $\Z^d$ under $\leqlex$ are
precisely the downward closed sets under $\leqlex$. Symmetrically,
ideals of $\Z^d$ under $\geqlex$ are the downward closed sets under
$\geqlex$, which in turn are the upward closed sets under
$\leqlex$. Therefore, upward closed subsets of $\Z^d$ under $\leqlex$
can be described by replacing $\upc$ with $\downc$ in the description
of $\idealslex(\Z^d)$ given by
Prop.~\ref{prop:ideals:description}. Upward and downward closed
subsets can thus be represented symbolically with disjoint finite
unions of products of terms of the form $\{a\}$, $\downc{a}$ or
$\upc{a}$, and $\Z$.

\begin{figure}[h]
\begin{center}
  \begin{tikzpicture}
    \begin{axis}[xmin=-6.2, xmax=6.95, ymin=-6.2, ymax=6.95, xtick={-6,...,5}, ytick={-6,...,5}, yticklabels={,,}, xticklabels={,,}, axis y line=center, axis x line=middle, inner axis line style={black}, axis on top, thick, scale=0.45, transform shape]

      \addplot[black,fill=fillcolor, fill opacity=0.8, draw=none]
      coordinates {(2.2,-6.2) (2.2,6.2) (-6.2,6.2) (-6.2,-6.2)
        (2.2,-6.2)} \closedcycle;
      
      \addplot[mark=none, fillcolor, opacity=0.8, line width=4pt]
      coordinates {(3,-1.8) (3,-6.2)};

      \foreach \x in {-6,...,2}{
        \foreach \y in {-6,...,6}{ 
          \addplot[mark=*, only marks, mark size=0.45pt, gray]
          coordinates {(\x,\y)};
        }
      }
      \foreach \y in {-6,...,-2}{  
        \addplot[mark=*, only marks, mark size=0.45pt, gray]
        coordinates {(3,\y)};
      }
    \end{axis}
  \end{tikzpicture}\hspace{30pt}
  \begin{tikzpicture}
    \begin{axis}[xmin=-6.2, xmax=6.95, ymin=-6.2, ymax=6.95, xtick={-6,...,5}, ytick={-6,...,5}, yticklabels={,,}, xticklabels={,,}, axis y line=center, axis x line=middle, inner axis line style={black}, axis on top, thick, scale=0.45, transform shape]

      \addplot[black,fill=fillcolor, fill opacity=0.8, draw=none]
      coordinates {(2.2,-6.2) (2.2,6.2) (-6.2,6.2) (-6.2,-6.2)
        (2.2,-6.2)} \closedcycle;
      
      \foreach \x in {-6,...,2}{
        \foreach \y in {-6,...,6}{ 
          \addplot[mark=*, only marks, mark size=0.45pt, gray]
          coordinates {(\x,\y)};
        }
      }
    \end{axis}
  \end{tikzpicture}\hspace{30pt}
  \begin{tikzpicture}
    \begin{axis}[xmin=-6.2, xmax=6.95, ymin=-6.2, ymax=6.95, xtick={-6,...,5}, ytick={-6,...,5}, yticklabels={,,}, xticklabels={,,}, axis y line=center, axis x line=middle, inner axis line style={black}, axis on top, thick, scale=0.45, transform shape]

      \addplot[black,fill=fillcolor, fill opacity=0.8, draw=none]
      coordinates {(-6.2,-6.2) (-6.2,6.2) (6.2,6.2) (6.2,-6.2)
        (-6.2,-6.2)} \closedcycle;
      
      \foreach \x in {-6,...,6}{
        \foreach \y in {-6,...,6}{ 
          \addplot[mark=*, only marks, mark size=0.45pt, gray]
          coordinates {(\x,\y)};
        }
      }
    \end{axis}
  \end{tikzpicture}
\end{center}
\caption{Example of each of the three types of ideals of $\Z^2$ under
  lexicographical ordering.}\label{fig:ideals:types}
\end{figure}

Inclusion between two downward (resp.\ upward) closed subsets is
decidable, e.g.\ we may translate $I \subseteq J$ into a formula of
the first-order theory of integers with addition,
i.e.\ $\textsf{FO}(\Z, +, <)$, which is decidable~\cite{Pr29}. For
example, to test whether $(\downc{2} \times \Z) \subseteq (\downc{1}
\times \Z) \cup (\{2\} \times \downc{3})$, we verify if the following
formula is satisfiable: $\varphi \defeq \forall x, y \in \Z,\ (x \leq
2) \implies ((x \leq 1) \lor (x = 2 \land y \leq 3))$.

Moreover, we can effectively add some $\vec{z} \in \Z^d$ to a downward
(resp.\ upward) closed subset $A \subseteq \Z^d$. This can be done in
polynomial time by adding $\vec{z}$ to the ``maximal points'' of the
representation of $A$. For example, $\downclex{(2, 3)} + (-3, 2) =
(\downc{1} \times \Z) \cup (\{2\} \times \downc{3}) + (-3, 2) =
(\downc(-2) \times \Z) \cup (\{-1\} \times \downc{5}) = \downclex{(-1,
  5)}$ as illustrated on the right of Fig.~\ref{fig:down:lex}.

From these observations, we can encode and manipulate downward/upward
closed subsets effectively, and thus:

\begin{prop}
  $\Z^d$-VASS form a post-effective and ideally effective class of
  WBTS.
\end{prop}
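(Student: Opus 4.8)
The plan is to take the WBTS property as already settled by the earlier proposition and to verify the two effectiveness notions separately, leaning throughout on the symbolic representation of downward/upward closed subsets of $\Z^d$ developed above. For post-effectiveness I would first fix an encoding of a $\Z^d$-VASS $\V = (Q, T)$ (listing $Q$, $d$, and the finitely many transitions of $T$) and of a state $(p, \vec{u}) \in Q \times \Z^d$. The machine $M_{\srun}$ then accepts $((p,\vec{u}),(q,\vec{v}))$ iff some $(p, \vec{z}, q) \in T$ satisfies $\vec{v} = \vec{u} + \vec{z}$, a finite check, and $M_{\preceqlex}$ simply tests $p = q$ together with the decidable lexicographic comparison $\vec{u} \leqlex \vec{v}$. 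Since $T$ is finite, $\spost(p, \vec{u})$ has at most $|T|$ elements, so $\Z^d$-VASS are finitely branching and $|\spost(p,\vec{u})|$ is computed by enumerating $T$; this yields post-effectiveness.

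For ideal effectiveness, the first step is to pin down the ideals of $Q \times \Z^d$ under $\preceqlex$. Because control states are ordered by equality, directedness forces every ideal to lie inside a single control state: an ideal has the form $\{p\} \times J$ where $J \subseteq \Z^d$ is downward closed under $\leqlex$, i.e.\ $J \in \idealslex(\Z^d)$ by Prop.~\ref{prop:downward:ideal}, and Prop.~\ref{prop:ideals:description} gives the explicit recursive shape of such $J$. I would therefore encode an ideal as a control state together with the finite symbolic description of $J$ (a product of terms $\{a\}$, $\downc{a}$ and $\Z$); recognizing whether a number encodes such an object is then decidable. Mapping a state $(p,\vec{u})$ to $\downc{(p,\vec{u})} = \{p\} \times \downclex{\vec{u}} = \{p\} \times \textsf{down}_d(\vec{u})$ is computable directly from the definition of $\textsf{down}_d$, and inclusion of two ideals reduces to inclusion of their $\Z^d$-parts within a common control state, which is decidable by the translation into $\textsf{FO}(\Z,+,<)$ noted above.

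The remaining and least routine point is computing $\downc{\spost(I)}$ for $I = \{p\} \times J$. Here $\spost(\{p\}\times J) = \bigcup_{(p,\vec{z},q)\in T} \{q\} \times (J + \vec{z})$, so the key observation I would record is that $\leqlex$ is translation invariant, whence each shifted set $J + \vec{z}$ is again downward closed and is computed in polynomial time by adding $\vec{z}$ to the maximal points of the symbolic representation of $J$. Grouping the transitions by target control state $q$ and taking the downward closed union $\bigcup_{(p,\vec{z},q)\in T}(J+\vec{z})$ then exhibits $\downc{\spost(I)}$ as a union of at most $|Q|$ ideals, each described symbolically, and this union is itself computable. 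Combining the two parts gives the claim. I expect the only genuine subtlety to be this last step, and specifically the twin facts that ideals cannot span distinct control states and that shifting by $\vec{z}$ preserves lexicographic downward closure.
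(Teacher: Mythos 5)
Your proof is correct and follows essentially the same route as the paper, which establishes this proposition via the preceding discussion: the recursive symbolic description of ideals of $\Z^d$ under $\leqlex$ (Propositions~\ref{prop:downward:ideal} and~\ref{prop:ideals:description}), decidability of inclusion via $\textsf{FO}(\Z,+,<)$, and computing $\downc{\spost(I)}$ by translating the symbolic representation by $\vec{z}$. Your added details --- that directedness confines each ideal to a single control state and that translation invariance of $\leqlex$ makes $J+\vec{z}$ downward closed --- are exactly the points the paper leaves implicit.
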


\section{Decidability of Coverability for Well Behaved Transition Systems}\label{sect:cov}

The coverability problem is defined as follows: on input an ordered
transition system $\S = (X, \srun, \leq)$ and two states $x, y \in X$,
determine whether $x \srunk{*} y'$ for some $y' \geq y$. In this
section, we show coverability decidable for WBTS that enjoy the
so-called ideal effectiveness. With all the effectiveness notions in
place, we then define the (apparently new) notion of a $(d,w)$-VASS,
i.e., \emph{weighted} $d$-VASS, as a WBTS that fulfills the required
effectiveness and thus has a decidable coverability problem.

The \emph{backward
  algorithm}~\cite{AbdullaCJT96,DBLP:conf/lics/AbdullaJ93,AL-79} is
perhaps the best known algorithm for deciding coverability in upward
pre-effective WSTS. It proceeds by starting with $\upc{y}$ and
computing iteratively the sequence $\upc{\pre(\upc{y})},
\upc{\pre(\upc{\pre(\upc{y})})}, \dots$ until the union of this
sequence stabilizes, which is guaranteed to happen by $\leq$ being a
well-quasi-ordering. The finite union of this sequence yields
$\upc{\pres(y)}$, and hence it suffices to verify whether this
contains $x$ or not. When $\leq$ is not a well-quasi-ordering, this
approach fails since the procedure may never halt. For example,
consider the $\Z^2$-VASS $\V \defeq (\{q\}, \{(q, (0, 1), q)\}$
ordered lexicographically. Since $\Z^d$-VASS are upward pre-effective,
we may execute the backward algorithm on $\V$. To verify whether
$\vec{y} = (1, 1)$ is coverable from $\vec{x} = (0, 0)$, the backward
algorithm iteratively computes $(q, \upclex{(1, 1)}), (q, \upclex{(1,
  0)), (q, \upclex{(1, -1)}), (q, \upclex{(1, -2)}}), \dots$ as
illustrated in Fig.~\ref{fig:backward:2zvass}. Since this sequence is
strictly increasing and does not contain $\vec{x}$, the backward
algorithm never halts.

\begin{figure}[h]
\begin{center}
  \begin{tikzpicture}
    \begin{axis}[xmin=-1.2, xmax=6.65, ymin=-6.2, ymax=6.2, xtick={-6,...,5}, ytick={-6,...,5}, yticklabels={,,}, xticklabels={,,}, axis y line=center, axis x line=middle, inner axis line style={black}, axis on top, thick, scale=0.45, transform shape]

      \addplot[black,fill=fillcolor, fill opacity=0.8, draw=none]
      coordinates {(1.8,-6.2) (1.8,6.2) (6.2,6.2) (6.2,-6.2)
        (1.8,-6.2)} \closedcycle;
      
      \addplot[mark=none, fillcolor, opacity=0.8, line width=4pt]
      coordinates {(1,0.8) (1,6.2)};

      \foreach \x in {2,...,6}{
        \foreach \y in {-6,...,6}{ 
          \addplot[mark=*, only marks, mark size=0.75pt, gray]
          coordinates {(\x,\y)};
        }
      }
      \foreach \y in {1,...,6}{  
        \addplot[mark=*, only marks, mark size=0.75pt, gray]
        coordinates {(1,\y)};
      }

      \addplot[mark=square*, only marks, mark size=3pt, black] coordinates
              {(0,0)};
      \addplot[mark=triangle*, only marks, mark size=3pt, black]
      coordinates {(1,1)};
    \end{axis}
  \end{tikzpicture}\hspace{7pt}
  \begin{tikzpicture}
    \begin{axis}[xmin=-1.2, xmax=6.65, ymin=-6.2, ymax=6.2, xtick={-6,...,5}, ytick={-6,...,5}, yticklabels={,,}, xticklabels={,,}, axis y line=center, axis x line=middle, inner axis line style={black}, axis on top, thick, scale=0.45, transform shape]

      \addplot[black,fill=fillcolor, fill opacity=0.8, draw=none]
      coordinates {(1.8,-6.2) (1.8,6.2) (6.2,6.2) (6.2,-6.2)
        (1.8,-6.2)} \closedcycle;
      
      \addplot[mark=none, fillcolor, opacity=0.8, line width=4pt]
      coordinates {(1,-0.2) (1,6.2)};

      \foreach \x in {2,...,6}{
        \foreach \y in {-6,...,6}{ 
          \addplot[mark=*, only marks, mark size=0.75pt, gray]
          coordinates {(\x,\y)};
        }
      }
      \foreach \y in {0,...,6}{  
        \addplot[mark=*, only marks, mark size=0.75pt, gray]
        coordinates {(1,\y)};
      }
      \addplot[mark=square*, only marks, mark size=3pt, black] coordinates
              {(0,0)};
      \addplot[mark=triangle*, only marks, mark size=3pt, black]
      coordinates {(1,1)};
    \end{axis}
  \end{tikzpicture}\hspace{7pt}
  \begin{tikzpicture}
    \begin{axis}[xmin=-1.2, xmax=6.65, ymin=-6.2, ymax=6.2, xtick={-6,...,5}, ytick={-6,...,5}, yticklabels={,,}, xticklabels={,,}, axis y line=center, axis x line=middle, inner axis line style={black}, axis on top, thick, scale=0.45, transform shape]

      \addplot[black,fill=fillcolor, fill opacity=0.8, draw=none]
      coordinates {(1.8,-6.2) (1.8,6.2) (6.2,6.2) (6.2,-6.2)
        (1.8,-6.2)} \closedcycle;
      
      \addplot[mark=none, fillcolor, opacity=0.8, line width=4pt]
      coordinates {(1,-1.2) (1,6.2)};

      \foreach \x in {2,...,6}{
        \foreach \y in {-6,...,6}{ 
          \addplot[mark=*, only marks, mark size=0.75pt, gray]
          coordinates {(\x,\y)};
        }
      }
      \foreach \y in {-1,...,6}{  
        \addplot[mark=*, only marks, mark size=0.75pt, gray]
        coordinates {(1,\y)};
      }
      \addplot[mark=square*, only marks, mark size=3pt, black] coordinates
              {(0,0)};
      \addplot[mark=triangle*, only marks, mark size=3pt, black]
      coordinates {(1,1)};
    \end{axis}
  \end{tikzpicture}\hspace{7pt}
  \begin{tikzpicture}
    \begin{axis}[xmin=-1.2, xmax=6.65, ymin=-6.2, ymax=6.2, xtick={-6,...,5}, ytick={-6,...,5}, yticklabels={,,}, xticklabels={,,}, axis y line=center, axis x line=middle, inner axis line style={black}, axis on top, thick, scale=0.45, transform shape]

      \addplot[black,fill=fillcolor, fill opacity=0.8, draw=none]
      coordinates {(1.8,-6.2) (1.8,6.2) (6.2,6.2) (6.2,-6.2)
        (1.8,-6.2)} \closedcycle;
      
      \addplot[mark=none, fillcolor, opacity=0.8, line width=4pt]
      coordinates {(1,-2.2) (1,6.2)};

      \foreach \x in {2,...,6}{
        \foreach \y in {-6,...,6}{ 
          \addplot[mark=*, only marks, mark size=0.75pt, gray]
          coordinates {(\x,\y)};
        }
      }
      \foreach \y in {-2,...,6}{  
        \addplot[mark=*, only marks, mark size=0.75pt, gray]
        coordinates {(1,\y)};
      }
      \addplot[mark=square*, only marks, mark size=3pt, black] coordinates
              {(0,0)};
      \addplot[mark=triangle*, only marks, mark size=3pt, black]
      coordinates {(1,1)};
    \end{axis}
  \end{tikzpicture}
\end{center}
\caption{From left to right: first four iterations of the backward
  algorithm trying to determine whether $\vec{y} = (1, 1)$ is
  coverable from $\vec{x} = (0, 0)$ in the $\Z^2$-VASS $\V = (\{q\},
  \{(q, (0, 1), q)\})$. Vectors $\vec{x}$ and $\vec{y}$ are
  respectively marked as~\protect\tikz\fill (0.1, 0.1) rectangle (0.3,
  0.3); and~\protect\tikz\fill (0,0) -- (0.1,0.2) --
  (0.2,0);.}\label{fig:backward:2zvass}
\end{figure}

By contrast, we show in this section that the forward approach for
coverability, initially presented by Geeraerts, Raskin, and Van Begin
\cite{DBLP:conf/fsttcs/GeeraertsRB04,DBLP:journals/jcss/GeeraertsRB06} for WSTS\footnote{The idea had
  also appeared in 1982; see~\cite[Corollary
    8.7]{Pa82} where it was applied to the
  reachability problem for communicating finite automata with FIFO
  channels.} and simplified in~\cite{Finkel&Goubault-Larrecq:09:b,BFM14,BFM16}, avoids this problem
and actually works for WBTS under the same effectiveness
hypothesis. The approach relies on decompositions of downward closed
sets into finitely many ideals. The proof that the forward approach of
~\cite{BFM14,BFM16} is correct for WSTS requires no essential
modification for WBTS, but we expand it in more details here.

In order to decide whether $y$ is coverable from $x$, we execute two
procedures in parallel, one looking for a coverability certificate and
one looking for a non coverability certificate. Procedure~\ref{alg:1}
iteratively computes
\begin{align*}
  \downc{x},\ \downc{\post(\downc{x})},\ \downc{\post(\downc{\post(\downc{x}))}},\ \dots
\end{align*}
until it finds $y$.

\begin{algorithm}[h]
  \DontPrintSemicolon
  $D \leftarrow \downc{x}$\;
  \While{$y \not\in D$}{
    $D \leftarrow D \cup \downc{\post(D)}$\;
    }
  \Return{\textit{true}}
  \caption{searches for a coverability certificate of $y$ from
    $x$}\label{alg:1}
\end{algorithm}

\begin{algorithm}[h]
  \DontPrintSemicolon
  $i \leftarrow 0$\;
  \While{$\neg(\downc{\post(D_i)} \subseteq D_i$ and $x \in D_i$ and $y \not\in D_i)$}{
    $i \leftarrow i + 1$\;
    }
  \Return{\textit{false}}
  \caption{enumerates inductive invariants to find non coverability
    certificate of $y$ from $x$.}\label{alg:2}
\end{algorithm}

The second procedure enumerates inductive invariants in some fixed
order $D_1, D_2, \dots$, i.e.\ downward closed subsets $D_i \subseteq
X$ such that $\downc{\post(D_i)} \subseteq D_i$. Any inductive
invariant $D_i$ such that $x \in D_i$ and $y \not\in D_i$ is a
certificate of non coverability. This is due to the fact that every
inductive invariant $D_i$ is an ``over-approximation'' of $\posts(x)$
if it contains $x$. Moreover, by standard monotonicity,
$\downc{\posts(x)}$ is such an inductive invariant and may eventually
be found.

We show that these two procedures are correct:

\begin{thm}\label{thm:cover:wsts:algo} 
  Let $\S = (X, \srun, \leq)$ be a WBTS, and let $x, y \in X$.
  \begin{enumerate}
  \item $y$ is coverable from $x$ if, and only if,
    Procedure~\ref{alg:1} terminates.
  \item $y$ is not coverable from $x$ if, and only if,
    Procedure~\ref{alg:2} terminates.
  \end{enumerate}
\end{thm}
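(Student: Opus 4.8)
The plan is to reduce both claims to one order-theoretic reformulation: \emph{$y$ is coverable from $x$ if, and only if, $y \in \downc{\posts(x)}$}. This is immediate from the definition, since $y$ is coverable iff some $y' \in \posts(x)$ satisfies $y \leq y'$, which is exactly membership of $y$ in $\downc{\posts(x)}$. The two procedures are then two complementary semi-decision strategies for this membership test, and the real work is to show that each halts in precisely its intended case.

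For claim~(1), set $D^{(0)} \defeq \downc{x}$ and $D^{(k+1)} \defeq D^{(k)} \cup \downc{\post(D^{(k)})}$, so that Procedure~\ref{alg:1} holds the value $D = D^{(k)}$ after $k$ iterations and terminates iff $y \in D^{(k)}$ for some $k$; as the sequence is increasing this is equivalent to $y \in \bigcup_k D^{(k)}$. I would therefore prove the key identity
\[
  \bigcup_{k} D^{(k)} = \downc{\posts(x)}.
\]
Inclusion $\supseteq$ is routine: a run $x \srunk{*} w$ of length $n$ places $w$ in $D^{(n)}$ by induction on $n$, and downward closedness of $D^{(n)}$ absorbs any $z \leq w$. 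Inclusion $\subseteq$ is where monotonicity is needed: by induction on $k$, any $z \in \downc{\post(D^{(k)})}$ has $z \leq w$ with $a \srun w$ and $a \in D^{(k)} \subseteq \downc{\posts(x)}$; picking $a' \in \posts(x)$ with $a' \geq a$ and applying (upward) monotonicity to $a \srun w$ yields some $w' \geq w$ with $a' \srunk{*} w'$, whence $w' \in \posts(x)$ and $z \leq w \leq w'$ gives $z \in \downc{\posts(x)}$. Combined with the reformulation, this settles claim~(1).

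For claim~(2) I would characterise non-coverability by the existence of a \emph{separating inductive invariant}: a downward closed $D$ with $\downc{\post(D)} \subseteq D$, $x \in D$ and $y \notin D$. If one exists, then $\post(D) \subseteq \downc{\post(D)} \subseteq D$, so $\posts(x) \subseteq D$ by a trivial induction; were $y$ coverable, some $y' \in \posts(x) \subseteq D$ with $y \leq y'$ would force $y \in D$, a contradiction, so $y$ is not coverable. Conversely, if $y$ is not coverable, then $D \defeq \downc{\posts(x)}$ is such an invariant: it is downward closed, contains $x$, excludes $y$ by the reformulation, and satisfies $\downc{\post(D)} \subseteq D$ by exactly the monotonicity computation of the previous paragraph. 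By Theorem~\ref{thm:antichains:ideals} the downward closed set $\downc{\posts(x)}$ is a finite union of ideals and hence occurs in the enumeration $D_1, D_2, \dots$; thus Procedure~\ref{alg:2} halts iff a separating inductive invariant exists iff $y$ is not coverable, which is claim~(2).

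The only substantive step, and the sole place the standing hypotheses are really used, is the monotonicity argument establishing that $\downc{\posts(x)}$ is an inductive invariant (equivalently, that the forward iterates never overshoot it); everything else is bookkeeping. I would highlight what each hypothesis contributes: \emph{finiteness of antichains} enters only through Theorem~\ref{thm:antichains:ideals}, ensuring that every downward closed set handled is a finite union of ideals and therefore finitely representable and enumerable, so that under ideal effectiveness both procedures are genuinely effective; well-foundedness, by contrast, is never invoked. This is exactly why the forward strategy continues to work after relaxing WSTS to WBTS, even though the backward fixpoint computation would no longer terminate.
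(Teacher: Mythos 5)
Your proposal is correct and follows essentially the same route as the paper: both reduce coverability to membership in $\downc{\posts(x)}$, show the forward iterates of Procedure~\ref{alg:1} compute exactly that set via monotonicity (you package the monotonicity step as a one-step induction on $k$ where the paper unfolds the whole chain and applies monotonicity $k$ times, a cosmetic difference), and establish claim~(2) by showing $\downc{\posts(x)}$ is itself a separating inductive invariant that the enumeration of Procedure~\ref{alg:2} must eventually reach. Your explicit remark that Theorem~\ref{thm:antichains:ideals} is where finiteness of antichains enters (and that well-foundedness is never used) is a nice articulation of what the paper only makes explicit later in Lemma~\ref{lem:down:effectiveness}.
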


\begin{proof}\leavevmode
  \begin{enumerate}
  \item Procedure~\ref{alg:1} computes
    $$ D = \bigcup_{k=0}
    \underbrace{\downc\post(\cdots\ \downc\post(\downc{x}))}_{k \text{
        times}}\ .
    $$ It suffices to show that $D = \downc{\posts(x)}$, since $y$ is
    coverable from $x$ if, and only if, $y \in
    \downc{\posts(x)}$.

    The inclusion $\downc{\posts(x)} \subseteq D$ is immediate. Let us
    prove that $D \subseteq \downc{\posts(x)}$. Let $z \in D$. There
    exist $k \in \N$ and $x_0, x_0', x_1, x_1', \ldots, x_k, x_k'$
    such that $x_0 = x$, $x_k' = z$, $x_i \geq x_i'$ for every $0 \leq
    i \leq k$, and $x_i' \srun{} x_{i+1}$ for every $0 \leq i < k$. By
    applying monotonicity $k$ times, we obtain $x \srunk{*} z'$ for
    some $z' \geq z$. Thus, $z \in \downc{\posts(x)}$, whence $D
    \subseteq \downc{\posts(x)}$.

  \item By a simple induction, it can be shown that
    $\downc{\sposts(D)} \subseteq D$ for every inductive invariant
    $D$. If Procedure~\ref{alg:2} terminates, then $y \not\in D
    \supseteq \downc{\sposts(D)} \supseteq \downc{\sposts(x)}$ which
    implies that $y$ is not coverable from $x$.

    It remains to show that Procedure~\ref{alg:2} terminates whenever
    $y$ is not coverable from $x$. To do so, it suffices to prove that
    $\downc{\sposts(x)}$ is an inductive invariant. Indeed, this
    implies that $\downc{\sposts(x)}$ is eventually found by
    Procedure~\ref{alg:2} when $y$ is not coverable from
    $x$. Formally, let us show that $\downc{\post(\downc{\posts(x)})}
    \subseteq \downc{\sposts(x)}$. Let $b \in
    \downc{\post(\downc{\posts(x)})}$, there exists $a', a, b'$ such
    that $x \srunk{*} a'$, $a' \geq a$, $a \srun b'$ and $b' \geq
    b$. By monotonicity, there exists $b'' \geq b'$ such that $a'
    \srunk{*} b''$. Therefore, $x \srunk{*} b''$ and $b' \geq b$,
    hence $b \in \downc{\posts(x)}$.\qedhere
  \end{enumerate}
\end{proof}

\noindent In order to implement Procedure~\ref{alg:1} and Procedure~\ref{alg:2},
some effectiveness hypotheses must be made. We argue that both
procedures may be implemented for ideally effective classes of
WBTS. We first need the following crucial proposition concerning
inclusion of ideals, in particular for testing inclusion of downward
closed sets. We include its proof for completeness:

\begin{prop}[{\cite{BFM14,BFM16}}] \label{prop:inclusdansunion}
  Let $X$ be a quasi-ordered set. For every $I, J_1, J_2, \ldots,
  \linebreak J_m \in \ideals(X)$, $I \subseteq J_1 \cup J_2 \cup \dots
  \cup J_m$ if, and only if, $I \subseteq J_j$ for some $1 \leq j \leq
  m$.
\end{prop}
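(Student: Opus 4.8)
The plan is to settle the trivial direction at once and then prove the contrapositive form of the substantive direction, relying solely on the fact that an ideal is \emph{directed}. The \emph{if} direction requires no work: if $I \subseteq J_j$ for some $1 \leq j \leq m$, then $I$ is contained in the union $J_1 \cup J_2 \cup \dots \cup J_m$ since $J_j$ is one of its constituents.

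For the \emph{only if} direction I would argue by contradiction. Suppose $I \subseteq J_1 \cup J_2 \cup \dots \cup J_m$ yet $I \not\subseteq J_j$ for every $1 \leq j \leq m$. First I would extract, for each index $j$, a witness $a_j \in I \setminus J_j$. The crucial step is then to produce a single element $c \in I$ that dominates all of the witnesses simultaneously, i.e.\ $a_j \leq c$ for every $j$. Since $I$ is an ideal, it is directed, so any two of its elements admit a common upper bound lying in $I$; iterating this property yields the desired $c$.

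With $c$ in hand the contradiction is immediate. Because $c \in I \subseteq J_1 \cup J_2 \cup \dots \cup J_m$, there is an index $j_0$ with $c \in J_{j_0}$. As $J_{j_0}$ is downward closed and $a_{j_0} \leq c$, we obtain $a_{j_0} \in J_{j_0}$, contradicting the choice $a_{j_0} \in I \setminus J_{j_0}$. Hence $I \subseteq J_j$ must hold for some $j$, completing the proof.

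The hard part will be the construction of the common upper bound $c$: directedness is only a binary condition, guaranteeing a common upper bound for two elements at a time, so I would formalize the passage to all $m$ witnesses by a short induction on $m$ (the base case being directedness itself, the inductive step combining the bound for the first $m-1$ witnesses with $a_m$). Every other step is routine once this dominating element is available.
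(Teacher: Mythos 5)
Your proof is correct and uses essentially the same idea as the paper: directedness supplies a common upper bound in $I$, which must land in one of the downward-closed sets $J_j$ and drags the corresponding witness down with it. The only cosmetic difference is where the induction sits — the paper proves the binary case ($I \subseteq J \cup K$) and inducts on the number of ideals in the union, whereas you induct once to obtain a single upper bound for all $m$ witnesses; the two are interchangeable.
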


\begin{proof}
  We claim that if a directed set $I$ is included in $J \cup K$ where
  $J$ and $K$ are downward closed, then either $I \subseteq J$ or $I
  \subseteq K$. The claim implies the proposition by a straightforward
  induction since an ideal is directed and any union of ideals is
  downward closed.

  To see the claim, let $I \subseteq J \cup K$ under the conditions
  stated and suppose to the contrary that there exist $s \in I
  \setminus J$ and $t \in I \setminus K$. Since $I$ is directed, there
  exists $u \in I$ such that $s \leq u$ and $t \leq u$. Since $u \in
  I$, either $u \in J$ or $u \in K$.  By downward closures of $J$ and
  $K$, either $s \in J$ or $t \in K$, a contradiction that proves the
  claim.
\end{proof}

From the definition of ideally effective classes of WBTS and from
Prop.~\ref{prop:inclusdansunion}, we can show that the elementary
operations of Procedure~\ref{alg:1} and Procedure~\ref{alg:2} are
computable. Formally:

\begin{lem}\label{lem:down:effectiveness}
  Let $\mathcal{C}$ be an ideally effective class of WBTS. There exist
  Turing machines $(M_{\textsf{down}}, M_{\subseteq},
  M_{\downc{\!\textsf{Post}}}, M_{\textsf{memb}})$ such that, on input
  $\S = (X, \srun, \leq) \in \mathcal{C}$,
  \begin{enumerate} 
  \item $M_{\textsf{down}}$ enumerates every downward closed subsets
    of $X$ by their ideal decomposition,
  \item $M_{\subseteq}$ decides inclusion between downward closed
    subsets of $X$ prescribed by their ideal decomposition,
  \item $M_{\downc{\!\textsf{Post}}}$ computes the ideal decomposition
    of $\downc{\post(D)}$ for downward closed subsets $D$ of $X$
    prescribed by their ideal decomposition,
  \item $M_{\textsf{memb}}$ decides $x \in D$, given $x \in X$ and a
    downward closed subset $D \subseteq X$ prescribed by its ideal
    decomposition.
  \end{enumerate}
\end{lem}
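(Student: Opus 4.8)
The plan is to build each of the four machines directly from the four capabilities granted by ideal effectiveness, using Theorem~\ref{thm:antichains:ideals} to guarantee that the downward closed subsets of $X$ are exactly the finite unions of ideals, and using Proposition~\ref{prop:inclusdansunion} to reduce inclusions of unions to inclusions of single ideals. Since every $X$ here is encoded over $\N$ and hence countable, and since a WBTS has no infinite antichain, Theorem~\ref{thm:antichains:ideals} applies and legitimizes representing every downward closed subset by a finite set of ideal encodings.

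First I would handle $M_{\textsf{down}}$. By Theorem~\ref{thm:antichains:ideals} the downward closed subsets of $X$ are \emph{precisely} the finite unions of ideals, so it suffices to enumerate finite unions of ideals. Because the set of encodings of $\ideals(X)$ is recursive, I can dovetail over all finite sets $\{I_1, \dots, I_k\}$ of ideal encodings; each such set represents the downward closed subset $I_1 \cup \dots \cup I_k$, and every downward closed subset is produced (under some choice of decomposition) in this way, so the enumeration is complete.

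Next, $M_{\subseteq}$: given $D = I_1 \cup \dots \cup I_m$ and $E = J_1 \cup \dots \cup J_n$ by their ideal decompositions, $D \subseteq E$ holds iff $I_i \subseteq E$ for every $i$. By Proposition~\ref{prop:inclusdansunion}, $I_i \subseteq J_1 \cup \dots \cup J_n$ iff $I_i \subseteq J_j$ for some $j$, and the latter is decidable since inclusion of ideals is decidable, so $M_{\subseteq}$ runs these $m \cdot n$ single-ideal tests. Machine $M_{\downc{\!\textsf{Post}}}$ exploits that $\post$ and $\downc$ distribute over unions, giving $\downc{\post(D)} = \bigcup_{i=1}^m \downc{\post(I_i)}$; each $\downc{\post(I_i)}$ is computable as a finite union of ideals by the last clause of ideal effectiveness, and concatenating these decompositions yields a decomposition of $\downc{\post(D)}$. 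Finally, $M_{\textsf{memb}}$ decides $x \in D$ by computing the ideal $\downc{x}$ (computable by the second clause of ideal effectiveness) and testing, for each $i$, whether $\downc{x} \subseteq I_i$; since each $I_i$ is downward closed, $x \in I_i$ iff $\downc{x} \subseteq I_i$, so $x \in D$ iff this inclusion holds for some $i$.

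The routine parts are the distributivity identities for $\post$ and $\downc$ over unions and the reformulation $x \in I_i \iff \downc{x} \subseteq I_i$. There is no genuine obstacle: the only real content is invoking the two earlier results that tame the unions, namely Theorem~\ref{thm:antichains:ideals} (so that finite ideal decompositions exist, making item~1 complete and guaranteeing the inputs to the other machines are well formed) and Proposition~\ref{prop:inclusdansunion} (so that inclusion of an ideal in a union collapses to a finite disjunction of single-ideal inclusions). Once these are in hand, each machine is an immediate assembly of the primitives furnished by ideal effectiveness.
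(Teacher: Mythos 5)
Your proposal is correct and follows essentially the same route as the paper: enumerate finite sets of ideal encodings for $M_{\textsf{down}}$, reduce inclusion of unions to single-ideal inclusions via Proposition~\ref{prop:inclusdansunion} for $M_{\subseteq}$, distribute $\downc{\post(\cdot)}$ over the ideal decomposition for $M_{\downc{\!\textsf{Post}}}$, and reduce membership to $\downc{x}\subseteq D$ for $M_{\textsf{memb}}$. The only (cosmetic) difference is that the paper additionally minimizes the concatenated union in item~3 using Proposition~\ref{prop:inclusdansunion} so as to output the canonical decomposition $\idealdecomp(\downc{\post(D)})$ of maximal ideals, a step your concatenation omits but which is immediate to add.
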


\begin{proof}\leavevmode
  \begin{enumerate}
  \item By Theorem~\ref{thm:antichains:ideals}, every downward closed
    subset of $X$ decomposes into finitely many ideals. Moreover,
    since $\C$ is ideally effective, ideals of $X$ may be effectively
    enumerated. Thus, $M_{\textsf{down}}$ enumerates downward closed
    subsets by enumerating finite subsets of ideals.

  \item Let $D, D' \subseteq X$ be the given downward closed subsets
    prescribed by their ideal decomposition. By
    Prop.~\ref{prop:inclusdansunion}, $D \subseteq D'$ if, and only
    if, for every $I \in \idealdecomp(D)$ there exists $J \in
    \idealdecomp(D')$ such that $I \subseteq J$. Therefore, this test
    can be performed by $M_{\subseteq}$.

  \item Let $D$ be the given downward closed subset prescribed by its
    ideal decomposition. Since $\C$ is ideally effective,
    $M_{\downc{\!\textsf{Post}}}$ can compute $Y_I = \downc{\post(I)}$
    for every $I \in \idealdecomp(D)$. We have,
    \begin{align}
      \downc{\post(D)} = \bigcup_{I \in \idealdecomp(D)} \bigcup_{J
        \in Y_I} J\ .\label{eq:post:d}
    \end{align}
    In order to obtain precisely $\idealdecomp(\downc{\post(D)})$,
    $M_{\downc{\!\textsf{Post}}}$ minimizes (\ref{eq:post:d}) by
    applying Prop.~\ref{prop:inclusdansunion}.

  \item Testing $x \in D$ is equivalent to testing $\downc{x}
    \subseteq D$. $M_{\textsf{memb}}$ obtains the encoding of
    $\downc{x}$ and tests $\downc{x} \subseteq D$ by using
    $M_{\subseteq}$. \qedhere
  \end{enumerate}
\end{proof}

\noindent From Theorem~\ref{thm:cover:wsts:algo} and
Lemma~\ref{lem:down:effectiveness}, we obtain the following result:

\begin{cor}\label{thm:cover:wsts}
  Coverability is decidable for any ideally effective class of WBTS.
\end{cor}

We recall that coverability is undecidable for a large class of WSTS
(hence for WBTS) when computations on ideals are not effective. It was
shown in~\cite{BFM14,BFM16} that coverability is undecidable even for
some post-effective classes of finitely branching WSTS with strong and
strict monotonicity.

As an application of Corollary~\ref{thm:cover:wsts}, we now argue that
vector addition systems with states, a model computationally
equivalent to Petri nets and thus a WSTS, can be extended in a non
articifial way to yield a WBTS that we will call a \emph{weighted
  VASS}. Recall that a \emph{vector addition system with states
  with $d$ counters ($d$-VASS)} is defined as a $\Z^d$-VASS (see
Sect.~\ref{sec:example:wbts}), but where the counters are not allowed
to drop below zero, and where the values of counters are ordered by
the usual componentwise ordering on $\N^d$. We propose to extend VASS
with weights, i.e.\ with additional counters over $\Z$. These counters
may represent, e.g., energy, fuel, time, money, or items of an
inventory, where positive amounts correspond to production or
availability, and negative amounts correspond to consumption or
deficits~\cite{DrosteG07,BokerCHK11,EFLQ13,BGM-atva14,JurdzinskiLS15}. To
the best of our knowledge, such an extension has never been studied
nor introduced. Formally, this new model is defined as follows.

\begin{defi}\label{def:weighted:vass}
  A \emph{weighted $(d, w)$-VASS}, where $d, w \in \N$, is a pair $\V
  = (Q, T)$ such that $Q$ is a finite set of \emph{control states} and
  $T \subseteq Q \times \Z^d \times \Z^w \times Q$ is a finite set of
  \emph{transitions}. A weighted $(d, w)$-VASS induces a transition
  system $(Q \times \N^d \times \Z^w, \srun)$ such that $(p, \vec{u} )
  \srun (q, \vec{v}) \defiff \exists (p, \vec{z}, q) \in T \text{
    s.t. } \vec{v} = \vec{u} + \vec{z}$ and $\vec{v}[1..d] \geq
  \vec{0}$.
\end{defi}

By definition, $d$-VASS and the $\Z^w$-VASS of
Sect.~\ref{sec:example:wbts} are special cases of weighted
VASS. Weighted VASS ordered with the usual componentwise ordering, are
not well-quasi-ordered even with a unique weight counter, and are not
WBTS as soon as they have more than one weight counter. However,
weighted VASS are WBTS when configurations are first ordered according
to the control state and the $d$ first counters, and \emph{then}
lexicographically according to the weights, i.e.\ when ordered under
\begin{align*}
  q(\vec{u}, \vec{v}) \leq q'(\vec{u}', \vec{v}') &\defiff (q = q')
  \land [(\vec{u} <_{\N^d} \vec{u}') \lor ((\vec{u} = \vec{u}') \land
    (\vec{v} \leqlex \vec{v}'))].
\end{align*}
Intuitively, weight counters are ordered according to some priorities,
and act as a tie-breaker among equal control states and
$\N^d$-counters values. Note that with a single weight counter, the
lexicographical ordering is precisely the usual ordering over $\Z$. It
can be shown that $\leq$ does not contain any infinite antichain since
$\N^d$ is well-quasi-ordered and $\leqlex$ does not contain any
infinite antichain.

Moreover, weighted VASS can be shown ideally effective under $\leq$ by
representing every ideal by an $\textsf{FO}(\Z, +, <)$-formula whose
purpose is to answer the membership query in the ideal. First, we
build a formula $\psi_\leq$ such that $\psi_\leq(x, y) \iff x \leq
y$. Then, all properties required for ideal effectiveness to hold are
satisfied as follows:
\begin{itemize}
\item testing whether a formula $\varphi$ encodes an ideal amounts to
  testing whether $\varphi$ is:
  \begin{itemize}
    \item downward closed: $\forall x, y\ [\varphi(x) \land
      \psi_\leq(y, x)] \rightarrow \varphi(y)$,

    \item directed: $\forall x, y\ [(\varphi(x) \land \varphi(y))
      \rightarrow \exists z\ (\psi_\leq(x, z) \land \psi_\leq(y, z)
      \land \varphi(z))]$;
  \end{itemize}

\item the ideal $\downc{x}$ can be represented by $\varphi_{x}(y)
  \defeq \psi_\leq(y, x)$;

\item inclusion of ideals $I$ and $J$ represented respectively by
  formulas $\varphi_I$ and $\varphi_J$ can be decided by testing
  $\forall x\ \varphi_I(x) \rightarrow \varphi_J(x)$;

\item given a formula $\varphi_I$ for an ideal $I$, the set
  $\downc{\spost(I)}$ can be represented by $\{\varphi_{\spost, t} : t
  \in T\}$ where $\varphi_{\spost, t}(x) \defeq \exists y,
  z\ [\psi_\leq(x, z) \land \varphi_I(y) \land \varphi_t(y, z)]$ and
  $\varphi_t(y, z)$ holds if and only if $y$ leads to $z$ under
  transition $t$.
\end{itemize}
Therefore, weighted VASS form a post-effective and ideally effective
class of WBTS under $\leq$, and by Corollary~\ref{thm:cover:wsts},
coverability is decidable for this model.

It is worth mentioning that weighted VASS are \emph{not} WBST under
the following similar but different ordering:
\begin{align*}
  q(\vec{u}, \vec{v}) \leq' q'(\vec{u}', \vec{v}') \defiff (q = q')
  \land (\vec{u} \leq_{\N^d} \vec{u}') \land (\vec{v} \leqlex
  \vec{v}').
\end{align*}
Indeed, $\{q(n, -n) : n \in \N\}$ is an infinite antichain for $\leq'$
when $d = 1$ and $w = 1$.

\section{Termination and boundedness}\label{sect:termination}

The termination and boundedness problems are respectively defined as
follows: on input an ordered transition system $\S = (X, \srun, \leq)$
and a state $x$, determine respectively whether
\begin{itemize}
  \item $\S$ terminates from $x$, i.e.\ there is no infinite sequence
    $x_1, x_2, \dots \in X$ such that $x \srun x_1 \srun x_2 \srun
    \dots$;
  \item $\S$ is bounded from $x$, i.e.\ $\sposts(x)$ is finite.
\end{itemize}

These two problems are undecidable in general, even for some classes
of finitely branching (non effective) WSTS. However, they are decidable under
reasonable monotonicity and effectiveness hypotheses~(see
e.g.~\cite{Finkel&Schnoebelen:01}). We observe that under these
hypotheses, termination and boundedness do not remain decidable for
WBTS. Hence WSTS and WBTS behave differently with respect to the
decidabilities of their termination and boundedness problems.

\begin{lem}
  There exists a post-effective class of finitely branching
  WBTS, with strong and strict monotonicity, and partial
  ordering, for which termination and boundedness are undecidable.
\end{lem}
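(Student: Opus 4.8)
The plan is to encode a Turing machine into a finitely branching WBTS so that both termination and boundedness become undecidable, while arranging the construction to be monotone in the strongest sense. The natural strategy is to reduce from the halting problem for Turing machines (or, equivalently, to reuse the standard undecidability of termination/boundedness for counter machines), exploiting the fact that a WBTS need not be well-founded. Concretely, I would take a deterministic machine $M$ and build an ordered transition system whose states record a configuration of $M$ together with an auxiliary $\Z$-valued component that is strictly decremented at every step. The $\Z$-component, ordered by the usual $\leq$ on $\Z$, plays the role of a ``clock'' that can fall below zero, which is exactly what is unavailable to a WSTS but permitted by a WBTS.

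First I would fix the state space and transition relation. A clean choice is to let $X = C_M \times \Z$, where $C_M$ is the set of configurations of $M$, and put $(c, n) \srun (c', n-1)$ whenever $c \srun_M c'$ is a step of $M$. To make the system finitely branching I would use the determinism of $M$ so that each state has at most one successor. For the ordering, I would use the componentwise partial order in which two states $(c,n),(c',n')$ are comparable only when $c = c'$, combined with the usual order on the $\Z$-component; this is a partial ordering, and it has no infinite antichain because any antichain can contain at most one element per configuration $c$ and the $\Z$-fibres above a fixed $c$ form a chain. Thus absence of infinite antichains holds even though the system is not well-founded (the $\Z$-clock descends forever).

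The heart of the argument is verifying the monotonicity conditions and wiring them to the reduction. Because the only comparable pairs share the same configuration $c$ and differ in the $\Z$-component, and because a transition acts uniformly on $c$ and translates the $\Z$-component by $-1$, strong monotonicity follows almost mechanically: if $(c,n) \srun (c',n-1)$ and $(c,m) \geq (c,n)$ (so $m \geq n$), then $(c,m) \srun (c', m-1)$ with $(c',m-1) \geq (c',n-1)$, and since $m>n$ forces $m-1>n-1$ we also get the strict variant. I would check upward monotonicity as written; the same computation gives strict monotonicity in one step, so the class is post-effective (each $\spost$ is a singleton computable from $M$'s transition function) and finitely branching. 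The decidability reductions are then immediate: $M$ halts from an initial configuration if and only if the corresponding WBTS terminates, and $M$'s reachable configuration set is finite if and only if the WBTS is bounded, so undecidability of halting and of finiteness of the reachable set transfers directly.

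The step I expect to be the main obstacle is reconciling finite branching and strong monotonicity with genuine undecidability, since a deterministic machine gives a system where every state has a unique successor and the $\Z$-clock never repeats, which could accidentally make every orbit either trivially terminating or trivially unbounded. The delicate point is to ensure that termination of the WBTS tracks halting of $M$ rather than being forced by the ever-decreasing clock: I must arrange that the system has \emph{no} transition out of a halting configuration, so that a halting run of $M$ yields a finite run in the WBTS, while a non-halting run yields an infinite one whose $\Z$-component descends without bound. Dually, for boundedness I must confirm that $\sposts(x)$ is finite exactly when $M$ visits finitely many configurations, which holds precisely because each visited configuration contributes exactly one state (its clock value being determined by the step count). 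Getting these two correspondences to hold simultaneously under the \emph{same} partial order and the \emph{same} monotonicity hypotheses is the real content; once the order is chosen so that comparability is confined to a single configuration's $\Z$-fibre, the monotonicity verifications and the two reductions fall into place.
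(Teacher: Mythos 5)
There is a genuine gap in your construction: the state space $X = C_M \times \Z$, ordered so that $(c,n)$ and $(c',n')$ are comparable only when $c = c'$, contains infinite antichains whenever $C_M$ is infinite --- and the set of configurations of a Turing machine (control state, head position, \emph{and tape contents}) is infinite. The set $\{(c,0) : c \in C_M\}$ is already an infinite antichain, so the systems you build are not WBTS. Your observation that ``any antichain contains at most one element per configuration'' is correct but does not bound antichain size when there are infinitely many configurations. Restricting $X$ to the reachable part does not help either: precisely for the machines your reduction needs (those visiting infinitely many configurations), the reachable states $(c_0,0),(c_1,-1),(c_2,-2),\dots$ with pairwise distinct configurations are pairwise incomparable, again an infinite antichain. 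So the reduction is sound as a reduction, but the objects it produces fall outside the class the lemma is about.

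The paper's proof avoids this by not storing the configuration in the state at all. The $i$-th system has state space $\{0,-1,-2,\dots\}$, which is a chain and hence trivially has no infinite antichain (and is a partial order), and its single transition $x \srun x-1$ is enabled if and only if $\tmi$ does not halt within $|x|$ steps. The simulation is pushed into the \emph{enabledness condition} of the transition relation, which is decidable by running $\tmi$ for finitely many steps; this gives post-effectiveness and finite branching, and strong and strict monotonicity follow because enabledness is antitone in $|x|$. Your monotonicity checks and the two equivalences (infinite run iff non-halting iff infinitely many successors) would go through for that chain-shaped system essentially as you wrote them; the idea missing from your proposal is that the configuration must be eliminated from the state --- replaced by a step counter whose transitions query the machine --- if the ordering is to have no infinite antichains.
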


\begin{proof}
  We give a reduction from the halting problem. Let $\tmi$ be the
  $i^\text{th}$ Turing machine in a classical enumeration. Let $\S_i
  \defeq (X, \srun, \leq)$ be the ordered transition system defined by
  $X \defeq \{0\} \cup (\Z-\N) = \{0, -1, -2, \dots\}$ and $$x \srun x
  - 1 \defiff \tmi \text{ does not halt on its encoding in $|x|$ steps
    or less}\ .$$ Let $\C \defeq \{\S_i : i \in \N\}$. We first show
  that $\C$ is a class of WBTS as described in the proposition. Let $i
  \in \N$. Since $|\post_{\S_i}(x)| \leq 1$ for every $x \in X$,
  $\S_i$ is finitely branching. Moreover, $\C$ is post-effective since
  testing $x \srun y$ only requires executing a Turing machine for a
  finite number of steps. Because $X$ is a partially ordered set
  without any infinite antichain, it remains to prove strong and
  strict monotonicity. Let $x, y, x' \in X$ be such that $x \srun y$
  in $\S_i$ and $x' > x$. By definition of $\srun$, $y = x-1$ and
  $\tmi$ does not halt in $|x|$ steps or less. Therefore, by $|x'| <
  |x|$, $\tmi$ does not halt in $|x'|$ steps or less, hence $x' \srun
  y'$ where $y' = x'-1 > x-1 = y$.

  Now, we note that there exists an infinite sequence $0, x_1, x_2,
  \dots \in X$ such that $0 \srun x_1 \srun x_2 \srun \dots$ in $\S_i$
  if, and only if, $\tmi$ does not halt, if and only if, $\sposts(0)$
  is infinite. Therefore, we conclude that termination and boundedness
  are both undecidable.
\end{proof}

Despite these negative results, we may exhibit a subclass of WBTS for
which termination and boundedness are decidable. Recall that the
reachability tree from an initial state $x_0$ in a transition system
$\mathcal{S}$ is a tree rooted at $x_0$ and having an edge $(x, y)$
for each pair of states $x, y$ such that $x \srun{} y$. Analogous to
the finite reachability tree for WSTS~\cite{Finkel&Schnoebelen:01},
which is obtained from truncation of the reachability tree, we define
the antichain tree that will provide algorithms for termination and
boundedness. Informally, whereas the criterion for truncating a branch
at a node labelled $x_j$ in the reachability tree is the occurrence of
an ancestor labelled $x_i$ with $x_i\leq x_j$, the criterion for
truncation in the antichain tree will be the weaker condition on $x_i$
that \emph{either} $x_i\leq x_j$ or $x_j\leq x_i$:

\begin{defi}[Antichain tree]
  Let $\S = (X, \srun, \leq)$ be a WBTS, and let $x_0 \in
  X$. The \emph{antichain tree} of $\S$ from the initial state $x_0$
  is a partial reachability tree $\AT(\S, x_0)$ with root $c_0$
  labelled $x_0$ that is defined and built as follows. For every $x
  \in \spost(x_0)$ we add a child labelled $x$ to $c_0$. The tree is
  then built iteratively in the following way. Only an unmarked node
  $c$ labelled $x$ is picked:
  \begin{itemize}
  \item if $c$ has an ancestor $c'$ labelled $x'$ such that $x' \leq
    x$ or $x \leq x'$, we mark $c$
  \item otherwise, we mark $c$ and for every $y \in \spost(x)$ we add
    a child labelled $y$ to $c$.
  \end{itemize}
\end{defi}

We observe that each path of the antichain tree is a prefix of a path
of the finite reachability tree, which is
finite~\cite[Lemma~4.2]{Finkel&Schnoebelen:01}, hence the antichain
tree is also finite. More formally:

\begin{lem}\label{lem:at}
  The antichain tree is finite and computable for finitely branching
  and post-effective WBTS.
\end{lem}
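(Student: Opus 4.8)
The plan is to prove the two assertions of Lemma~\ref{lem:at} separately: first that $\AT(\S, x_0)$ is finite, and then that it can be effectively constructed. For finiteness I would invoke König's lemma, so I must establish that the tree is finitely branching and that it has no infinite path. Finite branching is immediate: a node labelled $x$ receives children only for the states in $\spost(x)$, and since $\S$ is finitely branching, $\spost(x)$ is finite.

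The crux is showing that every path is finite, and here the key observation is that along any root-to-leaf path the labels of the \emph{expanded} (non-truncated) nodes form an antichain. Indeed, by the construction a node $c$ labelled $x$ is expanded only when \emph{no} ancestor $c'$ labelled $x'$ satisfies $x' \leq x$ or $x \leq x'$; hence the label of an expanded node is incomparable to the labels of all its ancestors. Consequently, if $c_i$ and $c_j$ are expanded nodes on a common path with $c_i$ an ancestor of $c_j$, their labels are incomparable, so the expanded labels along the path are pairwise incomparable. Since a truncated node has no children and thus ends its path, any infinite path would consist entirely of expanded nodes, yielding an infinite antichain and contradicting the defining property of a WBTS. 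Therefore every path is finite, and König's lemma gives that the whole tree is finite.

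For computability I would describe a worklist construction that builds $\AT(\S, x_0)$ node by node. Since a post-effective class is in particular effective, the machine $M_{\leq}$ lets me decide, for a newly picked node labelled $x$ and each of its finitely many ancestors labelled $x'$, whether $x' \leq x$ or $x \leq x'$, and thus whether to truncate. When the node is to be expanded, post-effectiveness together with $M_{\rightarrow}$ computes the (finite, by finite branching) set $\spost(x)$, from which the children are produced. The procedure halts precisely because the tree is finite by the first part, so the worklist is eventually exhausted.

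The main obstacle is the finiteness argument, and specifically the fact that it cannot be routed through the finite reachability tree of WSTS theory: that tree relies on well-foundedness and may be infinite for a general WBTS, for instance the decrementing system on $\{0, -1, -2, \dots\}$, whose strictly decreasing run is never truncated by the one-directional criterion $x_i \leq x_j$. The symmetric truncation criterion of the antichain tree is exactly what repairs this, turning each branch's internal labels into an antichain, so that the absence of infinite antichains---rather than the full well-quasi-ordering---suffices. Once this claim is in place, both König's lemma and the effectiveness bookkeeping are routine.
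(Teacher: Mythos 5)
Your proof is correct and follows essentially the same route as the paper: both arguments combine K\"onig's Lemma with the observation that the symmetric truncation criterion forces the labels along any non-truncated branch to form an antichain, contradicting the WBTS hypothesis, and both dispatch computability via post-effectiveness. Your phrasing (direct: no infinite path exists, then K\"onig) versus the paper's (contrapositive: an infinite tree yields an infinite branch, which yields a comparable pair that should have caused truncation) is an immaterial difference.
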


\begin{proof}
  Suppose that $\AT(\S, x_0)$ is infinite. As $\S$ is finitely
  branching, by K\"{o}nig's Lemma, there is an infinite branch $c_0
  \srun_{\AT} c_1 \srun_{\AT} \dots$ in this tree labelled by the
  following infinite sequence: $x_0, x_1, \dots$ Since $\leq$ is
  without infinite antichains, there is a least $j$ for which some $i
  < j$ satisfies $x_i \leq x_j$ or $x_j \leq x_i$. But then the branch
  would have been truncated at $c_j$ or at $c_i$ and this is a
  contradiction, hence $\AT(\S, x_0)$ is finite. The tree is
  computable since $\S$ is post-effective.
\end{proof}

Let us state a useful lemma.

\begin{lem}\label{lem:maxpath}
  Any path in the reachability tree of a WBTS $\S$ from $x_0$
  has a finite prefix labelling a maximal path in the antichain tree
  $\AT(\S, x_0)$.
\end{lem}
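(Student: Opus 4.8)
The plan is to fix an arbitrary path $\pi$ in the reachability tree from $x_0$, labelled $x_0 \srun x_1 \srun x_2 \srun \cdots$ (either an infinite run, or a finite one ending at a state with no successors), and to trace it step by step inside the antichain tree $\AT(\S, x_0)$. First I would record that any untruncated initial segment of $\pi$ can be followed verbatim in $\AT(\S, x_0)$: the root $c_0$ is labelled $x_0$, and whenever a traced node $c_i$ labelled $x_i$ is \emph{not} truncated, the construction adds one child per element of $\spost(x_i)$, in particular a child $c_{i+1}$ labelled $x_{i+1}$ whenever $x_{i+1}$ occurs on $\pi$. This produces a path $c_0, c_1, \ldots$ of $\AT(\S, x_0)$ whose labels coincide with a prefix of $\pi$.

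Next I would determine when this traced path stops. It ends exactly at the first node $c_m$ that is a leaf of $\AT(\S, x_0)$, and such a leaf arises in one of two ways: either $c_m$ is truncated, meaning it has an ancestor whose label is comparable to $x_m$; or $c_m$ is untruncated but $\spost(x_m) = \emptyset$, so $c_m$ has no children and $\pi$ itself terminates at $x_m$. In either case $c_0, c_1, \ldots, c_m$ is a \emph{maximal} path of $\AT(\S, x_0)$, and its label sequence $x_0, x_1, \ldots, x_m$ is a finite prefix of $\pi$, which is exactly what the lemma asks for.

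It therefore remains to rule out the possibility that the traced path $c_0, c_1, \ldots$ is infinite, and this is where the defining hypothesis of a WBTS enters; it is the crux of the argument. If the path were infinite, then no $c_j$ would ever be truncated, so by the truncation rule every label $x_j$ (for $j \geq 1$) would be incomparable to each of its ancestors' labels $x_0, \ldots, x_{j-1}$. Since the ancestors of $c_j$ along the traced path are precisely $c_0, \ldots, c_{j-1}$, this would force $\{x_i : i \in \N\}$ to be an infinite antichain of $(X, \leq)$, contradicting the assumption that $\S$ is a WBTS. Hence the traced path is finite and we are done.

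The main obstacle is conceptual rather than computational: one must recognise that ``the trace of $\pi$ is never truncated and never reaches a dead-end'' is literally the assertion that the labels along $\pi$ form an infinite antichain, after which the no-infinite-antichain property of WBTS closes the argument at once. The only point demanding care is the bookkeeping that the ancestors of a node on the traced path are exactly the earlier nodes of that same path, so that the absence of a comparable ancestor really does yield pairwise incomparability of the whole label sequence.
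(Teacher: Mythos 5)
Your proof is correct, and the paper in fact states Lemma~\ref{lem:maxpath} without proof, treating it as immediate from the construction of $\AT(\S,x_0)$ and the finiteness argument already given for Lemma~\ref{lem:at}. Your argument --- trace the run into the antichain tree until a leaf, and rule out an infinite trace because an untruncated infinite branch would have pairwise incomparable labels, contradicting the absence of infinite antichains --- is exactly the reasoning the paper uses for Lemma~\ref{lem:at} and clearly what the authors intend here, so no further comment is needed.
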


The proof of Lemma~\ref{lem:termtool} is a (self-contained) adaptation
of the proof of \cite[Prop.~4.5]{Finkel&Schnoebelen:01}.

\begin{lem}\label{lem:termtool}
  Let $\S = (X, \srun, \leq)$ be a finitely branching WBTS with
  upward and downward transitive monotonicity. Then $\S$ \emph{does
    not terminate} from $x_0$ if, and only if, there is a path $c_0
  \srun_{\AT} c_1 \srun_{\AT} \dots \srun_{\AT} c_i \srun_{\AT} \dots
  \srun_{\AT} c_j$ in $\AT(\S, x_0)$ with labels $x_0, x_1, \dots,
  x_j$ such that $x_i \leq x_j$ or $x_j \leq x_i$.
\end{lem}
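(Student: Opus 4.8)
The plan is to prove both directions, reducing the forward direction almost entirely to Lemma~\ref{lem:maxpath} and handling the backward direction by a pumping argument that uses \emph{both} monotonicities. Before starting, I would record the routine observation that transitive monotonicity lifts from single steps to runs: if $x \srunk{+} y$ and $x' \geq x$, then there exists $y' \geq y$ with $x' \srunk{+} y'$ (and symmetrically, if $x' \leq x$ then some $y' \leq y$ with $x' \srunk{+} y'$). This follows by induction on the length of the run $x \srunk{+} y$, applying the transitive upward (resp.\ downward) monotonicity hypothesis at each step.

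For the ($\Rightarrow$) direction, suppose $\S$ does not terminate from $x_0$, so there is an infinite run $x_0 \srun x_1 \srun x_2 \srun \cdots$ in the reachability tree. By Lemma~\ref{lem:maxpath}, this run has a finite prefix $x_0, x_1, \dots, x_j$ labelling a \emph{maximal} path $c_0 \srun_{\AT} \cdots \srun_{\AT} c_j$ of $\AT(\S, x_0)$. I would then argue that the leaf $c_j$ cannot be a dead end: since $x_0, \dots, x_j$ is a proper prefix of an infinite run, $x_j$ has at least one successor, so $\spost(x_j) \neq \emptyset$. By the construction of the antichain tree, the only remaining reason for $c_j$ to be childless is truncation, i.e.\ $c_j$ has an ancestor $c_i$ with label $x_i$ satisfying $x_i \leq x_j$ or $x_j \leq x_i$. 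This is exactly the path required by the statement.

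For the ($\Leftarrow$) direction, assume such a path with $x_i \leq x_j$ or $x_j \leq x_i$ exists; the path being genuine in the tree gives $x_i \srunk{+} x_j$. I would split into two cases. If $x_i \leq x_j$, I apply the lifted upward monotonicity to the run $x_i \srunk{+} x_j$ with the larger endpoint $x_j \geq x_i$ to obtain $y_1 \geq x_j$ with $x_j \srunk{+} y_1$; since $y_1 \geq x_j \geq x_i$ the argument iterates, producing an infinite run $x_j \srunk{+} y_1 \srunk{+} y_2 \srunk{+} \cdots$, which together with $x_0 \srunk{*} x_j$ witnesses non-termination. If instead $x_j \leq x_i$, the symmetric pumping uses the lifted \emph{downward} monotonicity: from $x_i \srunk{+} x_j$ and $x_j \leq x_i$ I obtain $y_1 \leq x_j$ with $x_j \srunk{+} y_1$, then iterate on $y_1 \leq x_j \leq x_i$ to build an infinite (this time $\leq$-decreasing) run.

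The main obstacle, and the conceptual heart of the lemma, is precisely this second case. In the classical WSTS setting only upward monotonicity is needed, because truncation there uses the one-sided condition $x_i \leq x_j$; here the antichain tree truncates on \emph{comparability}, so a branch may close because of a strictly \emph{smaller} descendant, and reconstructing an infinite run from such a pair forces the downward-monotonicity pump. This is exactly where dropping well-foundedness is harmless: the run produced in the second case may be strictly decreasing under $\leq$, which a WSTS would forbid but a WBTS permits. I would therefore take care to invoke downward transitive monotonicity explicitly there and to confirm that no well-foundedness is implicitly used anywhere in the argument.
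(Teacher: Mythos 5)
Your proof is correct and follows essentially the same route as the paper: the forward direction via Lemma~\ref{lem:maxpath} plus maximality of the truncated path, and the backward direction via iterated transitive monotonicity (upward or downward according to which of $x_i, x_j$ is smaller) applied along the segment $x_i \srunk{+} x_j$ to pump an infinite run. The only cosmetic difference is that you package the step-to-run lifting of transitive monotonicity as a separate preliminary observation, whereas the paper inlines the same induction along the segment (and closes with an appeal to K\"onig's Lemma where you concatenate the runs directly).
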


\begin{proof}
  \emph{Only if}. Suppose that an infinite run $x_0 \srun x_1 \srun
  \dots$ exists in $\S$. By Lemma~\ref{lem:maxpath}, a maximal path
  $c_0 \srun_{\AT} c_1 \srun_{\AT} \dots \srun_{\AT} c_j$ with labels
  $x_0, x_1, \dots, x_j$ exists in $\AT(\S, x_0)$. Since this path is
  maximal, it ought to have been the presence of some $i < j$ with
  $x_i \leq x_j$ or $x_j \leq x_i$ that caused the truncation.

  \noindent \emph{If}. Suppose that a path $c_0 \srun_{\AT} c_1
  \srun_{\AT} \dots \srun_{\AT} c_i \srun_{\AT} \dots \srun_{\AT} c_j$
  with comparable labels $x_i$ and $x_j$ exists in $\AT(\S, x_0)$.
  Then a run $x_0 \srunk{*} x_i \srun x_{i+1} \srunk{*} x_j$ is
  possible in $\S$. If $x_j \leq x_i$, then by downward transitive
  monotonicity, there exists $x_{j+1} \leq x_{i+1}$ such that $x_j
  \srunk{+} x_{j+1}$. By induction, for every $m > j$, there exist
  $x_{j+1}, x_{j+2}, \dots, x_{j+m}$ such that $x_0 \srunk{*} x_i
  \srunk{+} x_j \srunk{+} x_{j+1}\srunk{+} x_{j+2}\srunk{+} \cdots
  \srunk{+} x_{j+m}$. But then, by applying K\"{o}nig's Lemma to the
  finitely branching reachability tree of $\S$, we note that $\S$ does
  not terminate from $x_0$. The case $x_i \leq x_j$ is treated
  similarly, using the upward transitive monotonicity.
\end{proof}

The proof of Lemma~\ref{lem:boundedtool}
adapts~\cite[Prop.\ 4.10]{Finkel&Schnoebelen:01} and strengthens it in
that both transitive and strict monotonicity are required there, while
only strict monotonicity is required here.

\begin{lem}\label{lem:boundedtool}
  Let $\S = (X,\srun,\leq)$ be a finitely branching WBTS with
  upward and downward strict monotonicity and such that $\leq$ is a
  partial ordering. Then $\S$ \emph{is not bounded from} $x_0$ if, and
  only if, there is a path $c_0 \srun_{\AT} c_1 \srun_{\AT} \dots
  \srun_{\AT} c_i \srun_{\AT} \dots \srun_{\AT} c_j$ in $\AT(\S, x_0)$
  with labels $x_0, x_1, \ldots, x_j$ such that $x_i < x_j$ or $x_j <
  x_i$.
\end{lem}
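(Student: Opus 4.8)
The statement is an equivalence, and the plan is to treat the two implications by quite different means: strict monotonicity is needed only for the forward (``if'') direction, while the partial-order assumption (antisymmetry) is what powers the backward (``only if'') direction. Notably, I expect neither K\"onig's Lemma nor transitive monotonicity to be needed, which is what yields the claimed strengthening over \cite[Prop.\ 4.10]{Finkel&Schnoebelen:01}.

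For the ``if'' direction I would start from the given path, whose labels satisfy $x_0 \srunk{*} x_i \srunk{+} x_j$ in $\S$ (the segment from $c_i$ to $c_j$ has at least one edge since $i < j$). Suppose first $x_i < x_j$. The heart of the argument is a path-lifting sub-lemma: whenever $a \srunk{+} b$ with $a < b$, upward strict monotonicity yields some $c$ with $b \srunk{+} c$ and $b < c$. To establish it I would fix a concrete witnessing path $a = z_0 \srun z_1 \srun \cdots \srun z_n = b$ with $n \geq 1$ and lift it one edge at a time starting from $b > z_0$: applying upward strict monotonicity to each edge produces $w_1 > z_1, \ldots, w_n > z_n$ with $b \srunk{*} w_1 \srunk{*} \cdots \srunk{*} w_n$, so $b \srunk{*} w_n$ and $w_n > z_n = b$; since $w_n \neq b$ the run is nonempty, i.e.\ $b \srunk{+} w_n$, and $c \defeq w_n$ works. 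Iterating the sub-lemma produces an infinite strictly increasing sequence $x_j < x_j^{(1)} < x_j^{(2)} < \cdots$, all reachable from $x_0$; being pairwise distinct by antisymmetry, they witness that $\sposts(x_0)$ is infinite. The case $x_j < x_i$ is symmetric, using downward strict monotonicity to build an infinite strictly decreasing reachable sequence.

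For the ``only if'' direction I would argue the contrapositive: assuming $\AT(\S, x_0)$ contains no path with a strictly comparable pair, I show $\sposts(x_0)$ is finite. Let $V$ be the set of labels occurring in $\AT(\S, x_0)$, which is finite by Lemma~\ref{lem:at}. The claim is $\sposts(x_0) \subseteq V$, proved by induction on run length from the root label $x_0 \in V$. The decisive point, where antisymmetry of $\leq$ enters, is that ``comparable but not strictly comparable'' means ``equal'': hence if a node $c$ labelled $w$ is a truncated leaf, its comparable ancestor must carry the same label $w$ and, being a proper ancestor, is expanded. So every $w \in V$ labels some expanded node, whose children are labelled exactly by $\spost(w)$; therefore $\spost(w) \subseteq V$. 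Consequently, if $w \in \sposts(x_0)$ with $w \in V$ and $w \srun w'$, then $w' \in V$, which closes the induction and shows $\sposts(x_0) \subseteq V$ is finite.

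The main obstacle is the equality (cycle) phenomenon in the backward direction, which is precisely what separates boundedness from non-termination. A naive approach that extracts an infinite run and then invokes the absence of infinite antichains to find a comparable pair may only produce an equality $x_i = x_j$, i.e.\ a cycle, which is fully compatible with boundedness. The contrapositive bookkeeping above sidesteps this by confining \emph{all} reachable states to the finite label set $V$ of the antichain tree whenever no strict pair exists, thereby forcing strictness exactly when the system is unbounded. I would highlight that this is also where the partial-order hypothesis is genuinely used, whereas the forward direction exploits only strict (and not transitive) monotonicity.
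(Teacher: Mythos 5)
Your proof is correct, and while your ``if'' direction coincides with the paper's (the same edge-by-edge lifting via strict monotonicity, producing an infinite strictly monotone chain of reachable states, with antisymmetry guaranteeing the run segments are nonempty), your ``only if'' direction takes a genuinely different route. The paper argues directly: it builds the reachability tree of \emph{cycle-free} runs, applies K\"onig's Lemma to extract an infinite repetition-free run, invokes Lemma~\ref{lem:maxpath} to find a truncated maximal path of $\AT(\S,x_0)$ along a prefix, and then uses distinctness of the labels (from cycle-freeness) plus antisymmetry to upgrade $\leq$ to $<$. You instead argue the contrapositive: if no path of $\AT(\S,x_0)$ carries a strictly comparable pair, then every truncated leaf's comparable ancestor must (by antisymmetry) carry the \emph{same} label and, being a proper ancestor, is expanded; hence the finite label set $V$ of the antichain tree is closed under $\spost$, contains $x_0$, and so contains all of $\sposts(x_0)$. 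This is a clean alternative that avoids the auxiliary cycle-free tree, the second application of K\"onig's Lemma, and Lemma~\ref{lem:maxpath}, and it isolates exactly where the partial-order hypothesis enters; it even yields the extra information that $\sposts(x_0)\subseteq V$ in the bounded case. Two small caveats: K\"onig's Lemma is not avoided entirely, since you still rely on Lemma~\ref{lem:at} for the finiteness of $\AT(\S,x_0)$ and hence of $V$; and antisymmetry is also quietly used in your ``if'' direction, to conclude that the strictly increasing (or decreasing) chain consists of pairwise distinct states.
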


\begin{proof}
  \emph{Only if}. Suppose that $\sposts(x_0)$ is infinite.  Consider
  the reachability tree defined from cycle-free runs (hence runs with
  no repeated states) from $x_0$ in $\S$. By K\"{o}nig's lemma applied
  to this finitely branching tree, some such run $x_0 \srun x_1 \srun
  \dots$ in $\S$ is infinite. As in the proof of
  Lemma~\ref{lem:termtool}, Lemma~\ref{lem:maxpath} implies the
  existence in $\AT(S,x_0)$ of a path $c_0 \srun_{\AT} c_1 \srun_{\AT} \dots
  \srun_{\AT} c_i \srun_{\AT} \dots \srun_{\AT} c_j$ with labels $x_0,
  x_1, \dots, x_j$ such that either $x_i \leq x_j$ or $x_j \leq x_i$.
  Being distinct and comparable in a partial order, $x_i$ and $x_j$
  satisfy $x_i < x_j$ or $x_j < x_i$, as required.

  \noindent\emph{If}. Suppose that there exists a path $c_0
  \srun_{\AT} c_1 \srun_{\AT} \cdots \srun_{\AT} c_i \srun_{\AT}
  \cdots \srun_{\AT} c_j$ such that $x_i < x_j$ or $x_j < x_i$ exists
  in $\AT(\S, x_0)$. Then a run
  $$x_0 \srunk{*} x_i \srunk{k_0} x_j$$ is possible in $\S$ for the
  appropriate $k_0 > 0$. If $x_j < x_i$, then by $k_0$ applications of
  strict downward monotonicity, there exists $y_1 < x_j$ such that
  $$y_0 \defeq x_j \srunk{*} y_1.$$ Since $y_1 < y_0$, $y_0
  \srunk{k_1} y_1$ for some $k_1 > 0$. Hence the argument can be
  repeated to exhibit an infinite descending chain $y_0 > y_1 > y_2>
  \dots$ such that
  $$x_0 \srunk{*} x_i \srunk{*} y_0 \srunk{*} y_1 \srunk{*} y_2
  \cdots\ .$$ Hence $\sposts(x_0)$ is infinite. The case $x_i < x_j$
  is treated similarly, using upward strict monotonicity.
\end{proof}

The following holds:

\begin{thm}\label{thm:termplusbound}
\leavevmode
\begin{itemize}
\item Termination is decidable for any post-effective class of
  finitely branching WBTS with upward and downward transitive
  monotonicity.
\item Boundedness is decidable for any post-effective class of
  finitely branching WBTS with upward and downward strict
  monotonicity and partial ordering.
\end{itemize}
\end{thm}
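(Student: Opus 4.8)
The plan is to decide both problems by constructing the antichain tree once and then performing a finite inspection of it, relying on Lemmas~\ref{lem:at}, \ref{lem:termtool}, and~\ref{lem:boundedtool}, which together reduce each problem to a purely syntactic check on a finite object.

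For the termination clause, given $\S = (X, \srun, \leq)$ in the class and a state $x_0$, I would first build $\AT(\S, x_0)$. By Lemma~\ref{lem:at} this tree is finite and computable, since the class is finitely branching and post-effective: post-effectiveness supplies $\spost(x)$ as a finite set at each node and, being in particular effective, also supplies the decision procedure $M_\leq$ for the ordering. By Lemma~\ref{lem:termtool}, under upward and downward transitive monotonicity, $\S$ does \emph{not} terminate from $x_0$ if, and only if, some path in $\AT(\S, x_0)$ carries labels $x_i$ and $x_j$ with $i < j$ such that $x_i \leq x_j$ or $x_j \leq x_i$. Since the tree is finite, I would enumerate all ancestor--descendant pairs of nodes and test comparability of their labels using $M_\leq$; the procedure reports ``does not terminate'' exactly when such a comparable pair is found, and ``terminates'' otherwise. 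This decides termination.

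For the boundedness clause, the argument is identical except that Lemma~\ref{lem:boundedtool} replaces Lemma~\ref{lem:termtool}: under upward and downward \emph{strict} monotonicity with $\leq$ a partial ordering, $\S$ is \emph{not} bounded from $x_0$ if, and only if, some path in $\AT(\S, x_0)$ carries labels $x_i$ and $x_j$ with $x_i < x_j$ or $x_j < x_i$. The strict tests are decidable from $M_\leq$ (e.g.\ $x_i < x_j$ iff $x_i \leq x_j$ and $x_j \not\leq x_i$), so once again a finite inspection of the ancestor--descendant pairs of the finite, computable tree suffices to decide boundedness.

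Since the genuine content has already been discharged by the three lemmas, no real obstacle remains; the only thing to verify is that every step is effective, which follows from post-effectiveness (finite computation of the tree together with decidability of $\leq$ via $M_\leq$). The one point deserving care is that the characterizations in Lemmas~\ref{lem:termtool} and~\ref{lem:boundedtool} are valid only under their respective monotonicity hypotheses—transitive monotonicity for termination, strict monotonicity with a partial order for boundedness—so I would make sure that these are precisely the hypotheses carried over into the two clauses of the theorem, and that no stronger assumption is silently imported.
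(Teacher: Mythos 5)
Your proposal is correct and follows exactly the paper's own argument: compute the finite antichain tree via Lemma~\ref{lem:at}, then decide each problem by checking the finite comparability conditions of Lemma~\ref{lem:termtool} and Lemma~\ref{lem:boundedtool}, all effective thanks to post-effectiveness and the decision procedure for $\leq$. The extra care you take in spelling out the effectiveness of the comparability tests is consistent with, and slightly more explicit than, the paper's proof.
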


\begin{proof}
  Given a WBTS $\S = (X, \srun, \leq)$ and $x_0 \in X$, both
  the termination and the boundedness algorithm begin with the
  computation of $\AT(\S, x_0)$, doable by Lemma~\ref{lem:at}. Since
  $\AT(\S, x_0)$ is finite, the algorithm for termination can proceed
  to test the condition of Lemma~\ref{lem:termtool} and the algorithm
  for boundedness the condition of Lemma~\ref{lem:boundedtool}.
\end{proof}

\begin{rem}
  Under the hypotheses of Theorem~\ref{thm:termplusbound}, boundedness
  is decidable even when WBTS are infinitely branching. Indeed, it
  suffices in this case to add to the construction of the antichain
  tree the rule that a branch is further truncated when a node $x$
  such that $|\post(x)| = \infty$ is encountered. Recall that by
  definition of post-effectiveness, such an occurrence can be
  detected. Moreover, any such occurrence in the antichain tree
  implies unboundedness.
\end{rem}

\section{Conclusion}\label{sect:conc}

In this work we have noted that well-foundedness of the quasi-ordering
traditionally used to define a WSTS is not required for the purpose of
deciding coverability. Accordingly, we have defined WBTS by relaxing
the conditions on the ordering so as to only require the absence of
infinite antichains.

As proof of concept, we have introduced an extension of vector
addition systems called weighted $(d,w)$-VASS.  Weighted $(d,w)$-VASS
operate on their $\N^d$ component as normal VASS and they operate
without guards on a new $\Z^w$ component ordered by lexicographically
extending the usual order on $\Z$.  The resulting model is a WBTS that
is not a WSTS.  From studying the ideal structure of downward closed
subsets of $\Z^w$ under the latter ordering, we deduced that all
necessary effectiveness conditions hold for a forward algorithm to be
able to decide coverability for weighted $(d,w)$-VASS.  More
generally, this forward algorithm was shown able to decide
coverability for any WBTS that possesses the ``ideally effective''
property.

To delimit the picture, we have further shown that, unlike in the
well-studied case of WSTS, the termination and the boundedness
problems for WBTS become undecidable.  On the other hand, appropriate
downward and upward monotonicity conditions were shown to bring back
decidability for these problems.

As future work directions, other WBTS and orderings could be
studied. For example, the lexicographical ordering on words over a
finite alphabet could be used in lieu of $\Z^w$ as the weight domain
of WBTS and weighted VASS.  Beyond studying the ideal structure of
$\Sigma^*$ under this ordering for its own sake, it is conceivable
that models of practical use in verification might use such an
ordering for the purpose of modelling priorities.  Given the recent
focus on the complexity of VASS problems~\cite{Schmitz16},
investigating complexity questions for weighted VASS and specific WBTS
would certainly be worthwhile.

But the final take-home message of this paper might be that, as the
need arises, new models weaker than the WSTS can now be defined with
some hope for usability.

\section*{Acknowledgements}

The second author would like to thank Raphaël Carroy, Mirna Dzamonja,
Yan Pequignot and Maurice Pouzet for discussions on
Theorem~\ref{thm:antichains:ideals} at the Dagstuhl Seminar 16031 on
well quasi-orders in computer science held in January 2016. We would
also like to thank Philippe Schnoebelen for his valuable comments on
an early version of this paper and for sharing a draft version
of~\cite{GKNP16}, a paper in preparation with Jean Goubault-Larrecq,
Prateek Karandikar and K. Narayan Kumar whom we also thank. We thank
Laurent Doyen and Yaron Welner as well for helpful discussions.

\bibliographystyle{alpha}
\bibliography{references}

\newcommand{\etalchar}[1]{$^{#1}$}
\begin{thebibliography}{GLKKS16}

\bibitem[ACJT96]{AbdullaCJT96}
Parosh~Aziz Abdulla, Karlis Cerans, Bengt Jonsson, and Yih-Kuen Tsay.
\newblock General decidability theorems for infinite-state systems.
\newblock In {\em LICS}, pages 313--321, 1996.

\bibitem[ACJT00]{DBLP:journals/iandc/AbdullaCJT00}
Parosh~Aziz Abdulla, Karlis Cerans, Bengt Jonsson, and Yih-Kuen Tsay.
\newblock Algorithmic analysis of programs with well quasi-ordered domains.
\newblock {\em Inf. Comput.}, 160(1-2):109--127, 2000.

\bibitem[AJ93]{DBLP:conf/lics/AbdullaJ93}
Parosh~Aziz Abdulla and Bengt Jonsson.
\newblock Verifying programs with unreliable channels.
\newblock In {\em Proceedings of the Eighth Annual Symposium on Logic in
  Computer Science {(LICS} '93), Montreal, Canada, June 19-23, 1993}, pages
  160--170, 1993.

\bibitem[AL78]{AL-79}
André Arnold and Michel Latteux.
\newblock Récursivité et cônes rationnels fermés par intersection.
\newblock {\em CALCOLO}, 15(4):381--394, 1978.

\bibitem[BCHK11]{BokerCHK11}
Udi Boker, Krishnendu Chatterjee, Thomas~A. Henzinger, and Orna Kupferman.
\newblock Temporal specifications with accumulative values.
\newblock In {\em Proceedings of the $26^\text{th}$ Annual {IEEE} Symposium on
  Logic in Computer Science ({LICS})}, pages 43--52. {IEEE} Computer Society,
  2011.

\bibitem[BDK{\etalchar{+}}12]{DBLP:conf/rta/BertrandDKSS12}
Nathalie Bertrand, Giorgio Delzanno, Barbara K{\"o}nig, Arnaud Sangnier, and
  Jan St{\"u}ckrath.
\newblock On the decidability status of reachability and coverability in graph
  transformation systems.
\newblock In {\em RTA}, pages 101--116, 2012.

\bibitem[BFG{\etalchar{+}}15]{BFGHM15}
Michael Blondin, Alain Finkel, Stefan Göller, Christoph Haase, and Pierre
  McKenzie.
\newblock Reachability in two-dimensional vector addition systems with states
  is {PSPACE}-complete.
\newblock In {\em Proc.\ $\text{30}^{th}$ Annual ACM/IEEE Symposium on Logic in
  Computer Science (LICS)}, pages 32--43, 2015.

\bibitem[BFM14]{BFM14}
Michael Blondin, Alain Finkel, and Pierre McKenzie.
\newblock Handling infinitely branching {WSTS}.
\newblock In {\em Proc.\ $\text{41}^\text{st}$ International Colloquium on
  Automata, Languages, and Programming ({ICALP})}, pages 13--25. Springer,
  2014.

\bibitem[BFM16]{BFM16}
Michael Blondin, Alain Finkel, and Pierre McKenzie.
\newblock Handling infinitely branching well-structured transition systems.
\newblock {\em Information and Computation}, (submitted), 2016.

\bibitem[BGM14]{BGM-atva14}
Patricia Bouyer, Patrick Gardy, and Nicolas Markey.
\newblock Quantitative verification of weighted {K}ripke structures.
\newblock In {\em Proc. $12^\text{th}$ {I}nternational {S}ymposium on
  {A}utomated {T}echnology for {V}erification and {A}nalysis ({ATVA})}, pages
  64--80, 2014.

\bibitem[Bon75]{Bonnet}
R.~Bonnet.
\newblock On the cardinality of the set of initial intervals of a partially
  ordered set.
\newblock {\em Infinite and finite sets: to Paul Erdős on his $60^\text{th}$
  birthday}, pages 189--198, 1975.

\bibitem[BS13]{BS-fmsd2012}
Nathalie Bertrand and {\relax Ph}ilippe Schnoebelen.
\newblock Computable fixpoints in well-structured symbolic model checking.
\newblock {\em Formal Methods in System Design}, 43(2):233--267, 2013.

\bibitem[CHH16]{CHH14}
Dmitry Chistikov, Christoph Haase, and Simon Halfon.
\newblock Context-free commutative grammars with integer counters and resets.
\newblock {\em Theoretical Computer Science}, (appeared online) 2016.

\bibitem[DG07]{DrosteG07}
Manfred Droste and Paul Gastin.
\newblock Weighted automata and weighted logics.
\newblock {\em Theor. Comput. Sci.}, 380(1-2):69--86, 2007.

\bibitem[EFLQ13]{EFLQ13}
Zolt{\'{a}}n \'{E}sik, Uli Fahrenberg, Axel Legay, and Karin Quaas.
\newblock Kleene algebras and semimodules for energy problems.
\newblock In {\em Proc. $11^\text{th}$ International Symposium Automated
  Technology for Verification and Analysis ({ATVA})}, pages 102--117, 2013.

\bibitem[EFM99]{esparza99}
Javier Esparza, Alain Finkel, and Richard Mayr.
\newblock On the verification of broadcast protocols.
\newblock In {\em {LICS}}, pages 352--359, 1999.

\bibitem[ET43]{ET43}
P.~Erdős and A.~Tarski.
\newblock On families of mutually exclusive sets.
\newblock {\em Ann. of Math.}, 2(44):315--329, 1943.

\bibitem[FGL09a]{Finkel&Goubault-Larrecq:09:a}
Alain Finkel and Jean Goubault-Larrecq.
\newblock Forward analysis for {WSTS}, part {I}: Completions.
\newblock In {\em STACS}, pages 433--444, 2009.

\bibitem[FGL09b]{Finkel&Goubault-Larrecq:09:b}
Alain Finkel and Jean Goubault-Larrecq.
\newblock Forward analysis for {WSTS}, {P}art {II}: Complete {WSTS}.
\newblock In {\em ICALP}, pages 188--199, 2009.

\bibitem[FGL16]{FGL16}
Alain Finkel and Jean Goubault-Larrecq.
\newblock Forward analysis for {WSTS}, part~{I}: Completions (extended
  version).
\newblock Technical report, LSV, ENS Cachan -- Université Paris-Saclay, (in
  preparation) August 2016.

\bibitem[Fin87]{F-icalp87}
Alain Finkel.
\newblock A generalization of the procedure of {K}arp and {M}iller to well
  structured transition system.
\newblock In Thomas Ottmann, editor, {\em {P}roceedings of the 14th
  {I}nternational {C}olloquium on {A}utomata, {L}anguages and {P}rogramming
  ({ICALP}'87)}, volume 267 of {\em Lecture Notes in Computer Science}, pages
  499--508, Karlsruhe, Germany, July 1987. Springer-Verlag.

\bibitem[Fin90]{F90}
Alain Finkel.
\newblock Reduction and covering of infinite reachability trees.
\newblock {\em Information and Computation}, 89(2):144--179, 1990.

\bibitem[FPS01]{Finkel&Schnoebelen:01}
A.~Finkel and {Ph}ilippe Schnoebelen.
\newblock Well-structured transition systems everywhere!
\newblock {\em Theoret. Comput. Sci.}, 256(1--2):63--92, 2001.

\bibitem[Fra86]{F86}
R.~Fraïssé.
\newblock Theory of relations.
\newblock {\em Studies in Logic and the Foundations of Mathematics},
  118:1--456, 1986.

\bibitem[GHPR13]{DBLP:conf/apn/GeeraertsHPR13}
Gilles Geeraerts, Alexander Heu{\ss}ner, M.~Praveen, and Jean-Fran\c{c}ois
  Raskin.
\newblock $\omega$-{P}etri nets.
\newblock In {\em Petri Nets}, pages 49--69, 2013.

\bibitem[GLKKS16]{GKNP16}
Jean Goubault-Larrecq, Prateek Karandikar, K.~Narayan Kumar, and Philippe
  Schnoebelen.
\newblock The ideal approach to computing closed subsets in
  well-quasi-orderings.
\newblock (in preparation) 2016.

\bibitem[GRB04]{DBLP:conf/fsttcs/GeeraertsRB04}
Gilles Geeraerts, Jean-Fran\c{c}ois Raskin, and Laurent~Van Begin.
\newblock Expand, enlarge, and check: New algorithms for the coverability
  problem of {WSTS}.
\newblock In {\em FSTTCS}, pages 287--298, 2004.

\bibitem[GRB06]{DBLP:journals/jcss/GeeraertsRB06}
Gilles Geeraerts, Jean-Fran\c{c}ois Raskin, and Laurent~Van Begin.
\newblock Expand, enlarge and check: New algorithms for the coverability
  problem of {WSTS}.
\newblock {\em Journal of Computer and System Sciences}, 72(1):180--203, 2006.

\bibitem[HH14]{HH14}
Christoph Haase and Simon Halfon.
\newblock Integer vector addition systems with states.
\newblock In {\em Proc.\ $8^\text{th}$ International Workshop on Reachability
  Problems ({RP})}, pages 112--124. Springer, 2014.

\bibitem[JLS15]{JurdzinskiLS15}
Marcin Jurdziński, Ranko Lazic, and Sylvain Schmitz.
\newblock Fixed-dimensional energy games are in pseudo-polynomial time.
\newblock In {\em Proc. $42^\text{nd}$ International Colloquium on Automata,
  Languages, and Programming ({ICALP})}, pages 260--272, 2015.

\bibitem[KKW12]{KaiserKW12}
Alexander Kaiser, Daniel Kroening, and Thomas Wahl.
\newblock Efficient coverability analysis by proof minimization.
\newblock In {\em CONCUR}, pages 500--515, 2012.

\bibitem[KS12]{DBLP:conf/gg/KonigS12}
Barbara K{\"o}nig and Jan St{\"u}ckrath.
\newblock Well-structured graph transformation systems with negative
  application conditions.
\newblock In {\em ICGT}, pages 81--95, 2012.

\bibitem[LMP87]{LMP87}
J.D. Lawson, M.~Mislove, and H.~Priestley.
\newblock Ordered sets with no infinite antichains.
\newblock {\em Discrete Mathematics}, 63(2):225--230, 1987.

\bibitem[OW07]{OW07}
Jo{\"{e}}l Ouaknine and James Worrell.
\newblock On the decidability and complexity of metric temporal logic over
  finite words.
\newblock {\em Logical Methods in Computer Science}, 3(1), 2007.

\bibitem[Pac82]{Pa82}
Jan~K. Pachl.
\newblock Reachability problems for communicating finite state machines.
\newblock Technical Report CS-82-12, University of Waterloo, May 1982.
\newblock Newly formatted version available at
  \url{https://arxiv.org/abs/cs/0306121}.

\bibitem[Pou79]{Pouzet79a}
Maurice Pouzet.
\newblock Relations non reconstructibles par leurs restrictions.
\newblock {\em Journal of Combinatorial Theory, Series {B}}, 26(1):22--34,
  1979.

\bibitem[Pre29]{Pr29}
Mojżesz Presburger.
\newblock Über die {V}ollständigkeit eines gewissen {S}ystems der
  {A}rithmetik ganzer {Z}ahlen, in welchem die {A}ddition als einzige
  {O}peration hervortritt.
\newblock {\em Comptes rendus du {$\text{I}^\text{er}$} Congrès des
  mathématiciens des pays slaves}, pages 192--201, 1929.

\bibitem[PZ85]{PZ85}
Maurice Pouzet and Nejib Zaguia.
\newblock Dimension de {Krull} des ensembles ordonnés.
\newblock {\em Discrete Mathematics}, 53:173--192, 1985.

\bibitem[Sch16]{Schmitz16}
Sylvain Schmitz.
\newblock The complexity of reachability in vector addition systems.
\newblock {\em {SIGLOG} News}, 3(1):4--21, 2016.

\bibitem[SS11]{SS-icalp11}
Sylvain Schmitz and {\relax Ph}ilippe Schnoebelen.
\newblock Multiply-recursive upper bounds with {H}igman's lemma.
\newblock In {\em {ICALP}}, pages 441--452, 2011.

\bibitem[SS13]{SS-concur13}
Sylvain Schmitz and {\relax Ph}ilippe Schnoebelen.
\newblock The power of well-structured systems.
\newblock In {\em {CONCUR}}, pages 5--24, 2013.

\bibitem[WZH10]{DBLP:conf/fossacs/WiesZH10}
Thomas Wies, Damien Zufferey, and Thomas~A. Henzinger.
\newblock Forward analysis of depth-bounded processes.
\newblock In {\em FOSSACS}, pages 94--108, 2010.

\bibitem[ZWH12]{DBLP:conf/vmcai/ZuffereyWH12}
Damien Zufferey, Thomas Wies, and Thomas~A. Henzinger.
\newblock Ideal abstractions for well-structured transition systems.
\newblock In {\em VMCAI}, pages 445--460, 2012.

\end{thebibliography}

\end{document}